\newcommand{\boldn}{\mbox{\boldmath$n$}} 
\newcommand{\boldm}{\mbox{\boldmath$m$}} 
\newtheorem{thm}{Theorem}[section]
\newtheorem{cor}[thm]{Corollary}
\newtheorem{lem}[thm]{Lemma}
\newtheorem{prop}[thm]{Proposition}
\theoremstyle{definition}
\theoremstyle{remark}
\def\beq{\begin{eqnarray}}  
\def\eeq{\end{eqnarray}}  
\def\bsp{\begin{split}}  
\def\esp{\end{split}}
\def\d{\mathrm{d}}
\newcommand{\mbold}[1]{\mbox{\boldmath{\ensuremath{#1}}}}
\def \bl {\mbox{{\mbold\ell}}}
\def \bn {\mbox{{\bf n}}}
\def \bm {\mbox{{\bf m}}}
\def \bg {\mbox{{\mbold g}}}
\def \hn {\tilde{H}}
\def \wn{\tilde{W}}
\def\beq{\begin{eqnarray}}
\def\eeq{\end{eqnarray}}
\begin{document}

\title{\Large\textbf{Locally Boost Isotropic Spacetimes and the Type ${\bf D}^k$ Condition}}  
\author{{\large{D. McNutt${^1}$, A. Coley${^2}$, L. Wylleman${^1}$ and S. Hervik${^1}$ }    }
\vspace{0.3cm} \\ 
{\small $^{1}$ Faculty of Science and Technology, University of Stavanger},\\ {\small  N-4036 Stavanger, Norway}\\
{\small $^{2}$ Department of Mathematics and Statistics, Dalhousie University}, \\ {\small Halifax, Nova Scotia, Canada B3H 3J5}\\
\vspace{0.3 mm} 
{\small E-mail: \texttt{david.d.mcnutt@uis.no, aac@mathstat.dal.ca,}} 
\\
\hspace{ 11 mm } {\small \texttt{ lode.wylleman@uis.no, sigbjorn.hervik@uis.no} }
}     
\date{}     
\maketitle   
\pagestyle{fancy}   
\fancyhead{} 
\fancyhead[EC]{McNutt et al }   
\fancyhead[EL,OR]{\thepage}   
\fancyhead[OC]{Locally boost isotropic spacetimes and the type ${\bf D}^k$ condition}   
\fancyfoot{} 

\begin{abstract}

We consider the class of locally boost isotropic spacetimes in arbitrary dimension. For any spacetime with boost isotropy, the corresponding curvature tensor and all of its covariant derivatives must be simultaneously of alignment type {\bf D} relative to some common null frame. Such spacetimes are known as type ${\bf D}^k$ spacetimes and are contained within the subclass of degenerate Kundt spacetimes. Although, these spacetimes are $\mathcal{I}$-degenerate, it is possible to distinguish any two type ${\bf D}^k$ spacetimes, as the curvature tensor and its covariant derivatives can be characterized by the set of scalar polynomial curvature invariants for any type ${\bf D}^k$ spacetime. In this paper we find all type ${\bf D}^k$ spacetimes by identifying degenerate Kundt metrics that are of type ${\bf D}^k$ and determining the precise conditions on the metric functions.
\end{abstract}

\maketitle

\section{Introduction} \label{intro}

Any Lorentzian manifold, or spacetime, admitting a geodesic, shear-free, twist-free, non-expanding null congruence, $\bl$, belongs to the class of Kundt spacetimes \cite{exact}. The Kundt spacetimes are important solutions in four-dimensional (4D) general relativity (GR), with many applications for other field theories in 4D and in higher dimensions as well \cite{pravda2017exact, hervik2017universal, CFH2008}. For example, the near horizon geometry of the event horizon of a stationary asymptotically flat black hole spacetime is described by a Kundt spacetime \cite{MPS2018}. More generally, when the event horizon of a dynamical black hole does not interact with the exterior region, this is described by a non-expanding horizon (NEH), and the near horizon geometry of the NEH is again described by a Kundt spacetime \cite{Lewandowski2018}. In fact, these near horizon geometries belong to the subclass of {\it degenerate} Kundt spacetimes, to be defined below, which have the interesting property that they cannot be locally characterized by the set of their scalar polynomial curvature invariants ($SPIs$), $\mathcal{I}$, and they are said to be $\mathcal{I}$-degenerate.

To be precise, a spacetime is $\mathcal{I}$-degenerate if there exists a deformation of the metric $g_\tau$ such that $g_\tau$ is continuous in $\tau$, $g_0 = g$ and the limiting metric as $\tau \to \infty $, $g_\infty $ is not diffeomorphic to $g$ but the set of $SPIs$, $\mathcal{I}$, for $g$ and $g_\infty $ will be the same. Thus, the metric $g$ cannot be distinguished from $g_\tau$ using $SPIs$. A more practical definition of $\mathcal{I}$-degeneracy can be stated in terms of the structure of the curvature tensor and its covariant derivatives by exploiting the effect of a boost on a null coframe $\{ \bn, \bl, \bm^i\}$, $\bl' = \lambda \bl,~~ {\bf n'} = \lambda^{-1} \bn,$ for which the components of an
arbitrary tensor, ${\bf T}$, of rank $n$ transform as:
\beq T'_{a_1 a_2...a_n} = \lambda^{b_{a_1 a_2 ... a_n}} T_{a_1 a_2 ... a_n},~~ b_{a_1 a_2...a_n} = \sum_{i=1}^n(\delta_{a_i 0} - \delta_{a_i 1}),  \eeq
\noindent where $\delta_{ab}$ denotes the Kronecker delta symbol. The quantity, $b_{a_1 a_2 ... a_n}$, is called the {\it boost weight} (b.w.) of the frame component $T_{a_1 a_2 ... a_p}$. Any tensor can be decomposed in terms of the b.w. of its components and this b.w. decomposition gives rise to the {\it alignment
  classification} by identifying null directions relative to
which the components of a given tensor have a particular b.w. configuration. This classification reproduces the Petrov and Segre classifications in 4D, and also leads to a coarse classification in higher dimensions \cite{classa,classb,classc, OrtaggioPravdaPravdova:2013}. 

Denoting the boost order, $\mathcal{B}_{{\bf T}}(\bl)$, as the maximum b.w.~of a tensor, ${\bf T}$, for a null direction $\bl$. The Weyl tensor and any rank two tensor, {\bf T}, can be broadly classified into five {\it alignment types}: {\bf G}, {\bf I}, {\bf II}, {\bf III}, and {\bf N} if there exists an $\bl$ such that $\mathcal{B}_{{\bf T}} (\bl) = 2, 1, 0,-1,-2$, respectively, and we will say $\bl$ is ${\bf T}$-aligned, while if {\bf T} vanishes, then it belongs to alignment type {\bf O}. For higher rank tensors, like the covariant derivatives of the curvature tensor, the alignment types are still applicable despite the possibility that $|\mathcal{B}_{{\bf T}} (\bl)|$ may be greater than two. Any $\mathcal{I}$-degenerate spacetime admits a null frame such that all of the positive b.w. terms of the curvature tensor and its covariant derivatives are zero in this common frame, that is they are all of alignment type {\bf II} \cite{Hervik2011}. 

A degenerate Kundt spacetime admits a geodesic, shear-free, twist-free, non-expanding null congruence, $\bl$ for which the curvature tensor and its covariant derivatives are all, at least,  of type {\bf II} relative to $\bl$. In three and four dimensions, it has been proven that all $\mathcal{I}$-degenerate spacetimes are degenerate Kundt spacetimes, while in higher dimensions it is conjectured that the degenerate Kundt spacetimes contain all $\mathcal{I}$-degenerate spacetimes. Regardless of this conjecture, for any $\mathcal{I}$-degenerate spacetime where the curvature tensor and its covariant derivatives are of proper type {\bf II} (i.e., not of type {\bf D}), the metric cannot be characterized by $SPIs$, as only the b.w. zero components are involved in forming $SPIs$ \cite{CHP2010} and the negative b.w. components will never appear in the $SPIs$. Within the set of degenerate Kundt spacetimes, there is potentially a subclass of metrics that are in fact characterized by their $SPIs$, although not uniquely. For such a spacetime the curvature tensor and all of the covariant derivatives of the curvature tensor must be of alignment type {\bf D}. These spacetimes are necessarily $\mathcal{I}$-non-degenerate because any metric deformation within this subclass of metrics gives a new metric which has a differing set of $SPIs$. 

In this paper we investigate the class of spacetimes for which the Riemann tensor and its covariant derivatives are all of type {\bf D} relative to a preferred frame, which we will refer to as {\it type ${\bf D}^k$ spacetimes}. The following result was proven for type ${\bf D}^k$ spacetimes \cite{CHPP2009}:

\begin{thm}
If the Riemann tensor and all of its covariant derivatives $\nabla^{(k)}(Riem)$ of an $N$-dimensional spacetime $(M, {\bf g})$ are simultaneously of type {\bf D} (in the same frame), then the spacetime is degenerate Kundt. 
\end{thm}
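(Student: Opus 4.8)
The plan is to reduce the statement to an assertion about the optical structure of the aligned null congruence. Since alignment type ${\bf D}$ is a special case of type ${\bf II}$, the hypothesis already supplies a single null coframe $\{\bn,\bl,\bm_i\}$ in which the Riemann tensor $R$ and every covariant derivative $\nabla^{(k)}R$ have all their positive b.w.\ components vanishing with respect to $\bl$. A spacetime is degenerate Kundt precisely when it admits a null congruence that is geodesic, non-expanding, shear-free and twist-free and along which the curvature and all of its covariant derivatives are at least of type ${\bf II}$; the second requirement is immediate from the hypothesis as soon as $\bl$ itself generates such a congruence. Hence the entire content of the theorem is to show that $\bl$ is geodesic and that its optical matrix $L_{ij}=m_i^{a}m_j^{b}\nabla_b\ell_a$ vanishes --- equivalently, writing $\kappa_i=m_i^{a}\ell^{b}\nabla_b\ell_a$, that $\kappa_i=0$ and $L_{ij}=0$.

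First I would record the positive-boost-weight Ricci rotation coefficients in this coframe: there is $\kappa_i$, of b.w.\ $+2$, and $L_{ij}$, the non-affinity coefficient of $\bl$, and the transverse rotation coefficients of the frame along $\bl$, all of b.w.\ $+1$; every remaining rotation coefficient has b.w.\ $\le 0$. Next I would expand the frame components of $\nabla R$ as frame directional derivatives of the components of $R$ plus sums of rotation coefficients times components of $R$. Imposing that the b.w.\ $+2$ part of $\nabla R$ vanishes, and using that the positive b.w.\ components of $R$ are identically zero (so that their directional derivatives also vanish), one finds that every surviving term is a single factor of $\kappa_i$ times a b.w.\ $0$ component of $R$; this gives a linear system forcing $\kappa_i$ to be annihilated by the b.w.\ $0$ part of $R$. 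Imposing vanishing of the b.w.\ $+1$ part of $\nabla R$ gives, besides an evolution equation for the b.w.\ $0$ part of $R$ along $\bl$ involving only the ``gauge'' coefficients (non-affinity and transverse rotation, which vanish in an affinely parametrised, parallelly propagated frame along $\bl$), relations linear in $L_{ij}$ with coefficients the b.w.\ $0$ components of $R$. Replacing $R$ by $\nabla^{(k)}R$, and using that $\nabla^{(k+1)}R$ is again of type ${\bf D}$, gives the same relations with the b.w.\ $0$ components of $\nabla^{(k)}R$ in place of those of $R$; the first and second Bianchi identities and the Ricci (commutator) identities keep the system consistent order by order.

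In the generic case these relations force first $\kappa_i=0$, and then, once $\kappa_i$ has been removed, $L_{ij}=0$; thus $\bl$ generates a Kundt congruence, and since type-${\bf II}$ alignment of the curvature and of all its covariant derivatives is inherited from the hypothesis, the spacetime is degenerate Kundt. The main obstacle is the degenerate case in which a nonzero $\kappa_i$ (or, once that has been excluded, a nonzero $L_{ij}$) lies in the common kernel of the b.w.\ $0$ components of $R$ and of all of its covariant derivatives; I expect this to be the technical heart of the argument. The plan there is to exploit the fact that such degeneracy of the b.w.\ $0$ curvature enlarges the family of null directions with respect to which $R$ and every $\nabla^{(k)}R$ are of type ${\bf D}$, and within that family to locate --- after a suitable null rotation and reparametrisation --- a geodesic, non-expanding, shear-free, twist-free representative; in the subcase where the degeneracy forces a local (warped) product structure, one argues instead by induction on the factors. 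In every case the type-${\bf D}$ hypothesis keeps the curvature and all of its covariant derivatives of type ${\bf II}$ relative to whichever congruence is produced, which is exactly the degenerate Kundt conclusion.
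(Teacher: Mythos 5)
First, a point of reference: the paper does not actually prove this theorem --- it is imported from \cite{CHPP2009}, and the only place its mechanism surfaces in the present text is the later appeal to ``theorem 7.2 in \cite{CHPP2009}'', namely that the type ${\bf D}^k$ condition forces the connection coefficients of b.w. $\pm 1$ and $\pm 2$ to vanish. Your generic-case strategy is essentially that mechanism: since all non-zero b.w. components of $R$ and of every $\nabla^{(k)}R$ vanish, the b.w. $+2$ frame components of $\nabla^{(k+1)}R$ reduce to sums of (b.w. $+2$ rotation coefficient) $\times$ (b.w. $0$ part of $\nabla^{(k)}R$), giving linear conditions on $\kappa_i$; and the b.w. $+1$ components obtained by differentiating the identically vanishing positive b.w. curvature components (e.g. $R_{0ijk;l}$) are purely algebraic in $L_{ij}$ and the other b.w. $+1$ coefficients, with no directional derivative of the b.w. $0$ curvature entering --- a distinction you gesture at but should state explicitly, since it is what lets you separate the ``evolution'' equations from the algebraic constraints. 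So the skeleton of the generic case is sound and consistent with the cited source.

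The genuine gap is the degenerate case, which you correctly flag as ``the technical heart'' and then do not resolve. This is not a marginal subcase: it contains, for instance, every space of constant curvature and, more generally, every locally symmetric type ${\bf D}^k$ space, for which $\nabla R=0$ identically, so that \emph{all} of your linear conditions on $\kappa_i$ and $L_{ij}$ are satisfied vacuously for an arbitrary null frame and the algebraic mechanism yields no information whatsoever about the optical scalars of $\bl$. In precisely these cases one must \emph{construct} a geodesic, expansion-free, shear-free, twist-free null congruence (possibly not the $\bl$ one started with) by an independent argument --- for example by exhibiting $dS_n$ and $AdS_n$ explicitly in Kundt form as in \eqref{dSn}--\eqref{AdSn}, or by invoking the large-isotropy warped-product structure theorems used in section \ref{sec:Commutator}. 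Your proposed treatment of this branch --- ``locate a suitable representative within the enlarged family of aligned null directions'' --- is a statement of the goal rather than an argument, and the theorem has its real content exactly there, beyond the generic computation; until that branch is supplied, the proof is incomplete.
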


\noindent In fact, it can be proven that any such metric admits two null geodesic expansion-free, shear-free, and twist-free vector fields, implying that the metric is doubly degenerate Kundt \cite{CHPP2009}. These spacetimes constitute an exceptional subcase of the degenerate Kundt spacetimes.

While this result restricts the candidates for type ${\bf D}^k$ spacetimes it does not provide an explicit form for the metrics. In 4D the boost isotropic spacetimes have been fully determined \cite{cahen1968}. In higher dimensions, five dimensional (5D) examples have been provided as Einstein solutions to supergravity  for which all SPIs are constant \cite{CFH2008}, and more generally all locally homogeneous type ${\bf D}^k$ spacetimes have been determined in arbitrary dimension \cite{SM2018}. To identify the Kundt metrics which are of type ${\bf D}^k$, we will integrate the equations given  by the vanishing of the positive and negative b.w. terms of the Riemann tensor and its covariant derivatives using the following theorem from \cite{CHP2010} and two lemmas: 

\begin{thm} \label{DKthrm}
If a spacetime is of type ${\bf D}^k$, then the components of the curvature tensor are determined by their $SPIs$.
\end{thm}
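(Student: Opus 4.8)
The plan is to exploit the defining feature of a type ${\bf D}^k$ spacetime — that \emph{every} nonzero frame component of the Riemann tensor and of each $\nabla^{(k)}(Riem)$ has vanishing boost weight — together with the first fundamental theorem of invariant theory for the orthogonal group. First I would fix a null frame $\{\bn,\bl,\bm^i\}$ in which $Riem$ and all $\nabla^{(k)}(Riem)$ are simultaneously of type {\bf D}, so that only their b.w.\ zero components survive. The subgroup of the local Lorentz group preserving this b.w.\ decomposition — equivalently, fixing the two distinguished null directions spanned by $\bl$ and $\bn$, up to the discrete interchange $\bl\leftrightarrow\bn$ — is generated by the boosts $\bl\mapsto\lambda\bl,\ \bn\mapsto\lambda^{-1}\bn$ and by the spins $\bm^i\mapsto X^i{}_j\bm^j$ with $X\in O(N-2)$. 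The crucial point is that a b.w.\ zero component is, by definition, invariant under the boosts (the scaling factor is $\lambda^0=1$), so on the curvature data only the \emph{compact} group $O(N-2)$ acts nontrivially, and it acts on the transverse indices exactly as on ordinary $O(N-2)$-tensors (the $\bn$- and $\bl$-slots being inert under spins). Hence the full curvature content of a type ${\bf D}^k$ spacetime is encoded by a family $\{\mathcal{R}^{(k)}\}_{k\ge0}$ of $O(N-2)$-tensors (the b.w.\ zero components, organised by the pattern of $\bn$-, $\bl$- and transverse slots), well defined up to a single common $O(N-2)$ transformation.

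Next I would observe that every $SPI$ is a complete metric contraction of a polynomial in $Riem$ and its covariant derivatives, and that in such a contraction of b.w.\ zero components a $\bn$-slot (b.w.\ $+1$) can only be paired with a $\bl$-slot (b.w.\ $-1$) via $g^{01}$ while transverse slots pair via $\delta^{ij}$; any contraction involving a genuinely nonzero-b.w.\ term vanishes by the type {\bf D} hypothesis, since only b.w.\ zero components enter the $SPIs$ \cite{CHP2010}. Therefore, restricted to the type ${\bf D}^k$ class, the $SPIs$ are precisely the complete $O(N-2)$-contractions — including the "mixed" ones that link indices of $\mathcal{R}^{(j)}$ and $\mathcal{R}^{(k)}$ — of the family $\{\mathcal{R}^{(k)}\}$. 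By the first fundamental theorem of invariant theory for $O(N-2)$, these complete contractions generate the entire ring of $O(N-2)$-invariant polynomials on the direct sum of the representation spaces carrying the $\mathcal{R}^{(k)}$; and since $O(N-2)$ is compact its orbits are closed, so this ring of invariants separates orbits. Consequently the values of all $SPIs$ determine the family $\{\mathcal{R}^{(k)}\}$ up to a common $O(N-2)$ transformation, i.e.\ up to the admissible residual frame freedom — and, after fixing the spin by a canonical normalisation (e.g.\ diagonalising a suitable b.w.\ zero curvature operator), the components become explicit functions of the $SPIs$. Since for type ${\bf D}^k$ the b.w.\ zero components \emph{are} all the components, this yields the claim.

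The delicate points, and where I expect most of the work, are twofold. First, one must check that imposing the type {\bf D} condition uniformly on every $\nabla^{(k)}(Riem)$ genuinely leaves only data transforming as honest $O(N-2)$ tensors with boosts acting trivially — in particular that no positive- or negative-b.w.\ auxiliary quantities re-enter through the derivative slots when one expands the $SPIs$. Second, one must pin down the precise residual frame group: that the two curvature-aligned null directions are canonical enough that "equal $SPIs$ $\Rightarrow$ equal up to a spin (and the $\bl\leftrightarrow\bn$ interchange)" is the correct reading of "determined," including the handling of the discrete ambiguity. The invariant-theory step itself is standard once the boost-weight bookkeeping is in place; a more computational alternative — writing the b.w.\ zero components of $Riem$ and $\nabla^{(k)}(Riem)$ explicitly in the (doubly) degenerate Kundt frame guaranteed by Theorem~1 and exhibiting $SPIs$ that recover each of them — would avoid invoking the general theorem but would require the same careful identification of the canonical frame.
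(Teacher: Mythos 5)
First, a point of reference: the paper does not actually prove this statement --- Theorem \ref{DKthrm} is imported from \cite{CHP2010} and used as a black box --- so there is no in-paper proof to compare against line by line. Your overall strategy (restrict to the boost-weight-zero blocks, observe that boosts act trivially on them so that only the compact spin group $O(N-2)$ acts nontrivially, then invoke separation of orbits by polynomial invariants of a compact group) is the same architecture as the argument in the cited literature, so the approach is not wrong in spirit.

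There is, however, a genuine gap at the step where you assert that the $SPIs$ are ``precisely the complete $O(N-2)$-contractions --- including the mixed ones'' of the family $\{\mathcal{R}^{(k)}\}$. An $SPI$ is a complete \emph{Lorentzian} contraction of whole curvature tensors; on type ${\bf D}$ data it restricts to a particular \emph{linear combination} of complete $O(N-2)$-contractions of the individual b.w.\ zero blocks, summed over the admissible pairings of null slots (each $0$ paired with a $1$ through $g^{01}$). It is not immediate that these specific combinations generate, or even separate the same points as, the full $O(N-2)$-invariant ring of the blocks treated as independent tensors: for instance $R_{0101}$ and $\delta^{ij}R_{0i1j}$ enter the low-order $SPIs$ only through sums such as $R_{01}=R_{0101}+\delta^{ij}R_{0i1j}$, and disentangling them requires a further argument. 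In \cite{CHP2010} this is handled by viewing the b.w.\ zero curvature as a collection of operators which, precisely because of the type ${\bf D}$ structure, are diagonalizable with the block on the null $2$-plane proportional to the identity, so that traces of powers (which \emph{are} $SPIs$) determine all eigenvalues and hence the blocks up to the residual spin. Until you either supply that operator/eigenvalue step or prove directly that every needed ``mixed'' contraction is realizable from $SPIs$, the appeal to the first fundamental theorem and to orbit separation for $O(N-2)$ does not close the argument. Your second flagged issue --- pinning down the residual frame group, including the cases where the aligned null pair is not unique (which the paper later identifies as warped products with $dS_{\tilde n}$ or $AdS_{\tilde n+1}$) --- is also real but comparatively routine.
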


\begin{lem} \label{DKframes}
If the curvature tensor, $R_{abcd}$, and its covariant derivatives up to order $q \geq 1$, $R_{abcd;f_1,...,f_q},$  have only b.w. zero components, then the non-zero components of the Weyl tensor, $C_{abcd}$, and the Ricci tensor, $R_{ab}$, and their respective covariant derivatives up to order $q$ will be of b.w. zero.    
\end{lem}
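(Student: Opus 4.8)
The plan is to exploit two elementary facts about the boost-weight grading relative to the fixed null frame $\{\bn,\bl,\bm^i\}$ in which the hypothesis holds. First, the metric $g_{ab}$ and its inverse $g^{ab}$ have only b.w.\ zero components in this frame. Second, the property of ``having only b.w.\ zero components'' is stable under linear combinations, under tensor products (since the b.w.\ of a product component equals the sum of the b.w.'s of its factors), and under contractions against $g_{ab}$ or $g^{ab}$: a non-vanishing component of $g_{ab}$ or $g^{ab}$ either pairs an index of b.w.\ $+1$ with one of b.w.\ $-1$ or pairs two spacelike indices, so a contracted component carries exactly the b.w.\ of the remaining free indices. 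Moreover, since $\nabla g = 0$, taking covariant derivatives commutes with these metric contractions.

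Granting this, I would argue in two steps. First, writing the Ricci tensor as a metric trace of the Riemann tensor, say $R_{ab} = g^{cd}R_{cadb}$, the identity $\nabla g = 0$ gives $R_{ab;f_1\ldots f_p} = g^{cd}R_{cadb;f_1\ldots f_p}$ for every $p$, so $\nabla^{(p)}(\mathrm{Ric})$ is a metric trace of $\nabla^{(p)}(Riem)$; by hypothesis the latter has only b.w.\ zero components for $p \le q$, hence so do $\nabla^{(p)}(\mathrm{Ric})$ and, in turn, $\nabla^{(p)}(R)$ for the scalar curvature. Second, substitute the decomposition $C_{abcd} = R_{abcd} - \big(g_{a[c}S_{d]b} - g_{b[c}S_{d]a}\big)$, with $S_{ab}$ the Schouten tensor built algebraically from $R_{ab}$, $R$ and $g_{ab}$, and commute $\nabla^{(p)}$ through the metric factors; this writes $\nabla^{(p)}(C)$ as a fixed linear combination of tensor products of $g$ with $\nabla^{(p)}(Riem)$, $\nabla^{(p)}(\mathrm{Ric})$ and $\nabla^{(p)}(R)$. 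Each of these building blocks has only b.w.\ zero components — the first by hypothesis, the others by the first step — so by the stability properties above $\nabla^{(p)}(C)$ does as well, for all $p \le q$, which is the assertion. (For $N = 2$ the Weyl tensor vanishes identically, so one may take $N \ge 3$.)

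The only genuinely delicate point is the contraction bookkeeping: one has to check that tracing $\nabla^{(p)}(Riem)$ with $g^{ab}$ cannot convert a zero-b.w.\ configuration of the free indices into a nonzero one, i.e.\ that in $g^{cd}R_{cadb;f_1\ldots f_p}$ every surviving term has the summed pair $(c,d)$ contributing zero b.w.\ to the component. This is immediate from the explicit block form of $g^{ab}$ in a null frame, and once it is settled the rest of the argument is routine symbol manipulation.
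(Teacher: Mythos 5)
Your argument is correct: the metric and its inverse have only b.w.\ zero components in a null frame, tensor products add boost weights, metric contractions preserve the b.w.\ of the free indices, and $\nabla g=0$ lets you commute $\nabla^{(p)}$ past the traces in $R_{ab}=g^{cd}R_{cadb}$ and past the metric factors in the Riemann--Weyl--Schouten decomposition, so the conclusion follows for all $p\le q$. The paper itself states this lemma without proof (it is quoted alongside results from the cited earlier work), so there is no in-paper argument to compare against; what you have written is precisely the standard bookkeeping one would expect to supply, and the only point needing care --- that contracting with $g^{ab}$ cannot shift the boost weight of the surviving component --- is the one you explicitly address.
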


\begin{lem} \label{Dklem}
For a type ${\bf D}^k$ spacetime, any $SPI$, $I \in \mathcal{I}$, must satisfy:
\beq l^a \partial_{x^a} I = n^a \partial_{x^a} I = 0. \nonumber \eeq 
\end{lem}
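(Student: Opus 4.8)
The plan is to argue directly from the boost-weight bookkeeping, using only the type ${\bf D}$ conditions on the curvature and its covariant derivatives. Recall that any $SPI$ $I\in\mathcal{I}$ is, by definition, a linear combination of complete contractions of tensor products of the Riemann tensor and its covariant derivatives $\nabla^{(j)}(Riem)$ (contracted using copies of the metric, and possibly the volume form, both of which are of b.w.\ zero in any null frame). Since $I$ is a scalar, $l^a\partial_{x^a}I=l^a\nabla_a I$, and because $\nabla$ is a derivation commuting with metric contractions, $l^a\nabla_a I$ is again a linear combination of complete contractions: each is obtained from a monomial of $I$ by replacing one factor $\nabla^{(j)}(Riem)$ by $l^a\nabla_a\nabla^{(j)}(Riem)=l^a\,[\nabla^{(j+1)}(Riem)]_{\cdots\, a}$ and leaving every other factor and the entire contraction pattern untouched.

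The crux is the boost-weight shift produced by that contraction. Because the spacetime is of type ${\bf D}^k$, every $\nabla^{(j+1)}(Riem)$ is of type ${\bf D}$ in the common preferred frame, i.e.\ all of its non-vanishing frame components have b.w.\ zero. Contracting any one slot of such a tensor with $\bl$ singles out the frame value associated with $\bl$, which carries a fixed nonzero boost weight; hence every non-vanishing frame component of $l^a\nabla_a\nabla^{(j)}(Riem)$ has b.w.\ $-1$ with respect to its remaining indices (and correspondingly $n^a\nabla_a\nabla^{(j)}(Riem)$ has all components of b.w.\ $+1$). This is exactly where the full type ${\bf D}$ hypothesis is needed rather than merely type ${\bf II}$: under type ${\bf II}$ alone the $\bn$-contracted tensor could still retain a b.w.-zero part, and the argument below would fail in that direction.

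Finally, combine this with the elementary fact that a complete contraction of tensors all of whose non-vanishing frame components are of b.w.\ zero is again a tensor with only b.w.-zero components (the metric used to contract pairs indices of opposite b.w.). In each term of $l^a\nabla_a I$, the distinguished factor $l^a\nabla_a\nabla^{(j)}(Riem)$ — whose non-vanishing components all carry total b.w.\ $-1$ in the free indices, a property unaffected by any self-contractions among those indices — is contracted, over all of its free indices, against the product of the remaining factors, which, being built from tensor products and contractions of type ${\bf D}$ tensors, carries only b.w.-zero components. For such a complete contraction to be nonzero the two boost weights would have to sum to zero; since they are $-1$ and $0$, every term vanishes and $l^a\partial_{x^a}I=0$. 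Running the identical argument with $\bn$ in place of $\bl$ (using that type ${\bf D}$ is symmetric under $\bl\leftrightarrow\bn$) gives $n^a\partial_{x^a}I=0$. The only delicate point is the bookkeeping of the general contraction pattern — several factors, with possible internal self-contractions, and keeping track of which slot has been contracted with the null vector; the conceptual content is entirely the boost-weight shift of the preceding paragraph. (If preferred, Lemma~\ref{DKframes} lets one phrase the same reasoning in terms of the Weyl and Ricci tensors instead of the full Riemann tensor, with no change.)
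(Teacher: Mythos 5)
Your argument is correct, and there is in fact nothing in the paper to compare it against: Lemma~\ref{Dklem} is stated in the introduction as an imported tool alongside Theorem~\ref{DKthrm} and Lemma~\ref{DKframes} (attributed to \cite{CHP2010}) and is never proved in the text. Your boost-weight bookkeeping supplies a valid self-contained proof. The chain of reasoning — write $I$ as a sum of complete contractions of products of $\nabla^{(j)}(Riem)$; use the Leibniz rule to express $\ell^a\nabla_a I$ as a sum of complete contractions in each of which exactly one factor is the $\bl$-contraction of a $\nabla^{(j+1)}(Riem)$; observe that in the common type ${\bf D}$ frame this distinguished factor has all surviving components at a fixed nonzero boost weight while every other factor (and the metric, and any index raising or internal self-contraction) contributes only boost weight zero; conclude that the complete contraction pairs components whose weights cannot sum to zero and hence vanishes term by term — is sound, and the symmetric argument for $\bn$ goes through because type ${\bf D}$ (unlike mere type ${\bf II}$) is symmetric under $\bl\leftrightarrow\bn$. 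You correctly flag the two points where the argument could silently fail (self-contractions within the distinguished factor, and the failure of the $\bn$-direction under only type ${\bf II}$, which is consistent with generic degenerate Kundt metrics having $u$-dependent $SPI$s). The only cosmetic caveat is that whether the $\bl$-contracted slot shifts the residual weight to $-1$ or $+1$ depends on the sign convention for boost weight; all the argument needs is that the shift is nonzero, so this does not affect the conclusion.
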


\section{Kundt geometries}
\label{sec_geom}

The line element of the most general non-twisting $N$-dimensional geometry can locally be written in the form \cite{Podolsky:2014}:
\begin{equation}
\d s^2 = g_{pq}(v,u,x)\, \d x^p\,\d x^q-2\,w_{p}(v,u,x)\, \d u\, \d x^p -2\,\d u\,\d v-2 H (v,u,x)\, \d u^2 , \label{general nontwist}
\end{equation}
where $p$ and $q$ range over the $N-2$ spatial indices. The non-twisting character of the spacetime implies the existence of a foliation by null hypersurfaces ${u=\,}$ constant (i.e., a family of maximal integral submanifolds labelled by the coordinate $u$). By the Frobenius theorem, this is equivalent to the existence (locally) of a non-twisting null vector field $\bl$ that is everywhere tangent (and normal) to ${u=\,}$constant. 

Since this vector field $\bl$ generates a congruence of null geodesics in the whole spacetime, it is most natural to take their affine parameter~$v$ as the second coordinate, so that $\bl=\mathbf{\partial}_v$. Locally, for any fixed $u$ and $v$ we are left with an ${(N-2)}$-dimensional Riemannian manifold covered by the spatial coordinates ${x^p}$. We will use the indices $m,n,p,q$ (ranging from $3$ to ${N-2}$) to label these spatial coordinates on the transverse space.

The non-vanishing contravariant metric components are \begin{equation}
g^{pq}\,, \quad g^{vu}=-1\,, \quad g^{vp}= g^{pq}w_{q} \,,
\quad g^{vv}= -2 H+g^{pq}w_{p}w_{q} \,, \label{ContravariantMetricComp}
\end{equation}
where $g^{pq}$ is an inverse matrix to $g_{pq}$. This implies that \cite{PS2015}:
\begin{equation}
w_{p}= g_{pq}g^{vq} \,, \quad 2H= -g^{vv}+g_{pq}g^{vp}g^{vq} \,. \label{CovariantMetricComp}
\end{equation}

The covariant derivative of the geometrically privileged null vector field ${\bl=\mathbf{\partial}_v}$ with respect to the metric~(\ref{general nontwist}) is $${\ell_{a;b}=\Gamma^{u}_{~ab}=\frac{1}{2}g_{ab,v}},$$ so that ${\ell_{v;b}=0=\ell_{a;v}}$. Denoting ${m_i^{~a}}$ as the components of the ${N-2}$ unit vector fields $\boldm_i$ which form an orthonormal basis in the transverse Riemannian space, we can show $m_i^{~u}=0$ using the condition $\boldm_i\cdot\bl=0$, and hence the \emph{optical matrix}~\cite{OrtaggioPravdaPravdova:2013} defined as ${\rho_{ij}\equiv \ell_{a;b}\,m_i^a m_j^b}$ takes the form: \beq {\rho_{ij}=\ell_{p;q}\,m^p_im^q_j= \frac{1}{2} g_{pq,v}\,m^p_im^q_j}. \nonumber \eeq This can be decomposed as ${\rho_{ij}=A_{ij}+\sigma_{ij}+\Theta\delta_{ij}}$, where ${A_{ij}\equiv\rho_{[ij]}}$ is the anti-symmetric \emph{twist matrix}, ${\sigma_{ij}}$ is the symmetric traceless \emph{shear matrix}, and the scalar ${\Theta\equiv\frac{1}{N-2}\,\delta^{ij}\rho_{ij}}$ determines the \emph{expansion} of the privileged vector field $\bl$. Requiring that the metric is also expansion-free and twist-free, we recover the \emph{Kundt class} of spacetimes, in which the transverse metric, \beq {g_{pq}(u,v,x^p)=g_{pq}(u,x^p)}, \nonumber \eeq is independent of the affine parameter $v$ \cite{PS2015}.

\section{Connection Coefficients and Curvature Components}
Expressing the metric relative to a null frame as:
\beq g_{ab} = - 2\ell_{(a} n_{b)} + m^i_{(a} m^j_{b)} \delta_{ij}, \label{etametric} \eeq
\noindent the simplest null frame for the general metric (\ref{general nontwist}) is given by
\begin{equation}
{\textstyle \bl=\mathbf{\partial}_v \, , \ \quad \boldn=-H \mathbf{\partial}_v+\mathbf{\partial}_u \, , \ \quad \boldm_i=m_i^{~p}(-w_{p}{\partial}_v+\mathbf{\partial}_p)} \,, \label{nat null frame}
\end{equation}
\noindent where we will write $W_i = m_i^{~p} w_{p}$ to express the corresponding coframe as:  
\beq \bl = -\d u\, , \ \quad \boldn = -\d v - H \d u - W_i m^i_{~p} \d x^p \, , \ \quad \boldm^i = m^i_{~p}\d x^p, \label{nat null coframe}
 \eeq
with the following non-vanishing inner products: ${\bl\cdot\boldn=-1}$, ${\boldm_i\cdot\boldm_j=\delta_{ij}}$.  For the degenerate Kundt spacetimes, the null vector field, $\bl$, will already be aligned with the Riemann tensor and its covariant derivatives, since they must be of type ${\bf II}$ to all orders.  It is possible that the null vector field, $\boldn$, is not yet aligned with the curvature tensors. We suppose there is a null rotation about $\bl$ such that the Riemann tensor is of alignment type ${\bf D}$:
\beq \tilde{\bl} &=& \bl, \nonumber \\
\tilde{\boldn} &=& {\boldn} + B_i {\boldm}^i + \frac{|B|^2}{2} {\bl}, \label{NullRot} \\  
\tilde{\boldm}^i &=& {\boldm}^i + B^i {\bl}. \nonumber \eeq
\noindent Here, the vector fields without tildes are the original frame and $B^i$ denotes the $i$-th null rotation parameter.

We will compute the connection coefficients and curvature components in the new frame:
\beq \begin{aligned} \tilde{\boldn} = -\boldsymbol{\theta}^1 &=  -\d v - \left(H+\frac{|B|^2}{2} \right) \d u -  ( W_i - B_i) m^i_{~p} \d x^p \\ &= - \d v - \hn \d u - \wn_i m^i_{~p} \d x^p, \\
\tilde{\bl} = -\boldsymbol{\theta}^2 &= - \d u, \\
\tilde{\boldm}^i = \boldsymbol{\theta}^i &= m^i_{~p} \d x^p - B^i du, \end{aligned} \label{NullRotV} \eeq

\noindent with corresponding dual vector fields:
\beq \begin{aligned} {\bf e}_1 &= \partial_v, \\
{\bf e}_2 &=  \partial_u + [-\hn - B^i \wn_i] \partial_v + B^im_i^{~p} \partial_{x^p}, \\
{\bf e}_i &= m_i^{~p} \partial_{x^p} - \wn_i \partial_v. \end{aligned} \eeq


From theorem 7.2 in \cite{CHPP2009}, the type ${\bf D}^k$ condition requires  that certain connection coefficients of b.w. $\pm 2$ and $\pm 1$ relative to this coframe must vanish. In particular, the connection coefficients: 
$$\Gamma_{22i}, \Gamma_{2ij},\text{ and } \Gamma_{i2j}$$ \noindent must vanish. We will assume that the frame fields are uniquely defined at all points of spacetime. When multiple pairs of null vector fields exists $(\bl,\bn)$ satisfying the type ${\bf D}^k$ property it is possible that other frames exists where the curvature tensor and its covariant derivatives will be of type {\bf D} without all connection coefficients vanishing; this issue is discussed further in section \ref{sec:Commutator}.

Cartan's first structure equation with respect to $\boldsymbol{\theta}^1 = - \boldn$ gives the following components:
\beq \begin{aligned} -2 \Gamma^1_{~[12]} &= {\bf e}_1(\hn) - {\bf e}_1(\wn_i) B^i, \\
-2\Gamma^1_{~[1i]} &= {\bf e}_1(\wn_i), \\
-2\Gamma^1_{~[2i]} &= {\bf e}_2(\wn_i) - B^j {\bf e}_j(\wn_i) - {\bf e}_i( \hn) - \tilde{w}_p m_{i,u}^{~p} + B^j \wn_{[i;j]}, \\
-2\Gamma^1_{~[ij]} &= \wn_{[i;j]}. \end{aligned} \label{dtheta1}\eeq

\noindent Similarly for $\boldsymbol{\theta}^i = \boldm^i$ we find:
\beq \begin{aligned} -2\Gamma^i_{~[12]} & = {\bf e}_1(B^i), \\
-2\Gamma^i_{~[2j]} &= m^i_{~p,u} m_j^p + 2 m^i_{~[p,q]}m_j^{~p} m_k^{~q}  B^k + {\bf e}_j( B^i), \\ 
-2\Gamma^i_{~[jk]} & = - 2m^i_{~[p,q]} m_j^{~p} m_k^{~q} = D^i_{~jk}.  \end{aligned} \label{dmi} \eeq

\noindent Since Cartan's first structure equation for $\boldsymbol{\theta}^2 =- \bl$  vanishes, the quantities:  $\Gamma^1_{~[2i]}$, $\Gamma^1_{~[ij]}$ and $\Gamma^i_{~[2j]}$ must be equal to zero, and so the remaining non-zero connection coefficients ($\Gamma_{abc} = -\Gamma_{cba}$) are: 
\beq  \Gamma_{221} &=& - [ {\bf e}_1(\hn) - {\bf e}_1(\wn_i) B^i ], \label{GammaS}\\
\Gamma_{21i} &=& \frac12 {\bf e}_1 (\wn_i + B_i), \\
\Gamma_{i21}  &=& -\frac12 {\bf e}_1(\wn_i - B_i), \\
\Gamma_{1i2} &=& \frac12 {\bf e}_1(\wn_i - B_i), \\
\Gamma_{ijk} &=& \frac12 [ D_{ijk} - D_{jki} + D_{kij}]. \label{GammaE} \eeq

From Cartan's second structure equation we find the following set of non-zero Riemnann tensor components: 
\beq \begin{aligned} R_{ijlm} &= {\bf e}_{l} (\Gamma_{imj}) - {\bf e}_{m} (\Gamma_{ilj}) + \Gamma_{il}^{~~n} \Gamma_{nmj} - \Gamma_{im}^{~~n} \Gamma_{nlj} - 2 \Gamma_{ikj}\Gamma^k_{~[lm]}, \\ 
R_{1212} &= {\bf e}_1(\Gamma_{122}) - \Gamma_{12i} \Gamma^i_{~12} - 2 \Gamma_{1i2} \Gamma^i_{~[12]}, \\
R_{12ij} &= {\bf e}_j (\Gamma_{1j2}) - {\bf e}_i (\Gamma_{1i2}) - 2 \Gamma_{1k2} \Gamma^k_{~[ij]}, \\
R_{ij12} &= \Gamma_{21i} \Gamma_{12j} - \Gamma_{12i}\Gamma_{21j} - 2\Gamma_{ikj} \Gamma^k_{~[12]}, \\
R_{1i2j} &= - {\bf e}_j(\Gamma_{12i}) + \Gamma_{1j2} \Gamma_{12i} + \Gamma_{12k}\Gamma^k_{~ji}, \\
R_{2i1j} &= - {\bf e}_j(\Gamma_{21i}) + 2 \Gamma_{21i} \Gamma_{2[1j]} - \Gamma_{1j2} \Gamma_{21i} + \Gamma_{21k} \Gamma^k_{~ji}. \end{aligned} \label{bw0Riem} \eeq 

\noindent Along with a set of Riemann tensor components that must vanish:
\beq \begin{aligned} & R_{ij2k} = {\bf e}_2 (\Gamma_{ikj}), ~ R_{ij1k} = {\bf e}_1(\Gamma_{ikj}), \\
& R_{121i} = {\bf e}_1 (\Gamma_{1i2} ), R_{1i12} = {\bf e}_1(\Gamma_{12i}), \\ 
& R_{122i} = {\bf e}_2(\Gamma_{1i2}) - {\bf e}_i(\Gamma_{122}), \\
&R_{2i12} = - {\bf e}_2 (\Gamma_{21i}). \end{aligned} \eeq

\noindent Taking the full contraction of the Riemann tensor gives the Ricci scalar: 
\beq R = -2 [ R_{1212} + R_{2~1k}^{~k} + R_{1~2k}^{~k}]+ {^S} R, \eeq
\noindent where ${^S}R$ denotes the Ricci scalar associated to the transverse space metric $g_{pq}$. 

The vanishing of $R_{121i} = R_{1i12} = 0$, which are invariant under a null rotation about $\bl$, implies that $W_i$ must be  linear in $v$, 
\beq W_i = W^1_i (u,x^p) v + W^0_i(u,x^p). \label{WBini} \eeq
\noindent Applying the Lie derivative with respect to $\bl$ to the Ricci scalar, we find that $H_{,vvv} = 0$ and so $H$ is of the form:
\beq H = H^{(2)} \frac{v^2}{2}  + H^{(1)} v + H^{(0)}. \label{Hini}\eeq

\begin{prop} \label{prop:NoB}
For a type ${\bf D}^k$ spacetime, if any non-zero null rotation parameter in \eqref{NullRot} needed to align $\bn$ is independent of $v$, i.e., $B_i,~i \in [3, N]$ such that $B^i_{,v} = 0$, then there exists a coordinate system where $B^i = 0$.
 
\end{prop}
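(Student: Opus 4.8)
The plan is to exhibit an explicit change of the transverse coordinates, $x^p\mapsto\tilde x^p(u,x^q)$, leaving $u$ and $v$ untouched, which carries the rotated coframe \eqref{NullRotV} into the canonical Kundt coframe \eqref{nat null coframe} with the null rotation parameter set to zero. All the $B^i$-dependence that spoils the canonical form is concentrated in $\tilde{\boldm}^i=\boldsymbol{\theta}^i=m^i_{~p}\,\d x^p-B^i\,\d u$, so it suffices to find coordinates $\tilde x^p$ for which $\boldsymbol{\theta}^i=\hat m^i_{~p}\,\d\tilde x^p$ with $\hat m$ invertible; the remaining legs then follow at once. Indeed $\tilde{\bl}=-\d u$ is already canonical, and substituting $m^i_{~p}\,\d x^p=\boldsymbol{\theta}^i+B^i\,\d u$ into \eqref{NullRotV} gives $\tilde{\boldn}=-\d v-\hn\,\d u-\wn_i\,m^i_{~p}\,\d x^p=-\d v-(\hn+\wn_iB^i)\,\d u-\wn_i\,\hat m^i_{~p}\,\d\tilde x^p$, which is exactly the $\boldn$-leg of \eqref{nat null coframe} after the relabelling $\hat H:=\hn+\wn_iB^i$, $\hat W_i:=\wn_i$. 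Hence in the coordinates $(v,u,\tilde x^p)$ the rotated frame \emph{is} the natural null frame \eqref{nat null frame} of the metric, i.e.\ no null rotation is needed and $B^i=0$ there.

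The heart of the argument is the integrability of the codimension-$(N-2)$ distribution annihilated by $\{\boldsymbol{\theta}^3,\dots,\boldsymbol{\theta}^N\}$. In the Kundt class the transverse frame $m^i_{~p}$ depends only on $(u,x)$, and by hypothesis $B^i_{,v}=0$, so neither $\d(m^i_{~p})$ nor $\d B^i$ contains $\d v$; consequently every $\boldsymbol{\theta}^i$ and every $\d\boldsymbol{\theta}^i$ lies in the exterior algebra of the $(N-1)$-dimensional space $\mathcal{V}:=\langle\d u,\d x^3,\dots,\d x^N\rangle$. Therefore $\d\boldsymbol{\theta}^i\wedge\boldsymbol{\theta}^3\wedge\dots\wedge\boldsymbol{\theta}^N\in\Lambda^{N}\mathcal{V}=\{0\}$ for every $i$, and the Frobenius theorem produces local first integrals $\tilde x^3,\dots,\tilde x^N$ with $\boldsymbol{\theta}^i=\hat m^i_{~p}\,\d\tilde x^p$, $\hat m$ invertible. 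Since $\boldsymbol{\theta}^i(\partial_v)=0$, the vector $\partial_v$ is tangent to every integral leaf, so the $\tilde x^p$ may be chosen constant along $\partial_v$; then $\tilde x^p=\tilde x^p(u,x^q)$ with $\partial\tilde x^p/\partial x^q$ invertible, and $\hat m^i_{~p}=\hat m^i_{~p}(u,\tilde x)$ is $v$-independent. This is precisely where the hypothesis is indispensable: if some $B^i_{,v}\ne0$, the product $\d\boldsymbol{\theta}^i\wedge\boldsymbol{\theta}^3\wedge\dots\wedge\boldsymbol{\theta}^N$ acquires a generically nonvanishing $\d v\wedge\d u\wedge\boldsymbol{\theta}^3\wedge\dots\wedge\boldsymbol{\theta}^N$ term, the leaves become $v$-dependent, and the new transverse frame $\hat m$ --- hence the transverse metric --- would depend on $v$, destroying the Kundt form.

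It remains to verify that the new chart is still of Kundt type. The transverse metric $\hat g_{pq}=\delta_{ij}\,\hat m^i_{~p}\,\hat m^j_{~q}$ is $v$-independent because $\hat m$ is, and, using $B^i_{,v}=0$ together with \eqref{WBini}--\eqref{Hini}, $\hat H=\hn+\wn_iB^i$ remains at most quadratic in $v$ while $\hat W_i=\wn_i$ remains at most linear in $v$; hence the metric written in $(v,u,\tilde x^p)$ is again of the form \eqref{general nontwist}, with its natural null frame \eqref{nat null frame} serving as the type-$\mathbf{D}^{k}$ frame --- which is exactly the assertion that $B^i=0$ in this coordinate system. I expect the integrability computation of the middle paragraph to be the only real obstacle; the rest is routine bookkeeping, and the conceptual content is simply that the hypothesis $B^i_{,v}=0$ confines every relevant differential form to the $\d v$-free block $\mathcal{V}$, thereby annihilating the Frobenius obstruction.
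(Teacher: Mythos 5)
Your proof is correct, but it takes a genuinely different route from the paper's. The paper argues invariant-theoretically: it checks that a $v$-independent $B^i$ drops out of every connection coefficient and frame derivative that can enter a Cartan invariant of a type ${\bf D}^k$ metric (the only place $B^i$ survives is inside ${\bf e}_2$, which acts only on components of non-zero boost weight, all of which vanish by hypothesis), and then invokes the equivalence of metrics with identical Cartan invariants to conclude the metric is isometric to one with $B^i=0$. You instead construct the isometry explicitly: since $m^i_{~p}$ and $B^i$ depend only on $(u,x^q)$, the rotated transverse coframe $\boldsymbol{\theta}^i=m^i_{~p}\,\d x^p-B^i\,\d u$ and its exterior derivatives lie in the exterior algebra of the $(N-1)$-dimensional span of $\{\d u,\d x^q\}$, so the Frobenius obstruction $\d\boldsymbol{\theta}^i\wedge\boldsymbol{\theta}^3\wedge\cdots\wedge\boldsymbol{\theta}^N$ vanishes by a degree count, yielding $v$-independent first integrals $\tilde x^p(u,x^q)$ that absorb the $B^i\,\d u$ terms; your bookkeeping that $\hat H=\hn+\wn_iB^i$ stays quadratic and $\hat W_i=\wn_i$ stays linear in $v$, so the new chart is still degenerate Kundt, is right. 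Your version is more constructive and self-contained --- it does not lean on theorem \ref{DKthrm} or the Cartan--Karlhede equivalence machinery, and it exhibits the coordinate change rather than asserting its existence. What the paper's argument buys in exchange is reusability: in the proof of proposition \ref{prop:Bquad0} the same reasoning is recycled to strip off only the $v$-independent piece $B^{(0)\,i}$ of parameters that genuinely depend on $v$; there the full transverse codistribution is \emph{not} integrable and your Frobenius construction would not apply directly, whereas the observation that ``$B^{(0)\,i}$ never enters any Cartan invariant'' still goes through.
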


\begin{proof} 
At zeroth order, the non-zero components of the Riemmann tensor in equation \eqref{bw0Riem} will not contain the function $B^i(u,x^p)$. For higher order covariant derivatives of the Riemann tensor, we note that the connection coefficients in equations \eqref{GammaS}-\eqref{GammaE} arising in the covariant derivatives of a type {\bf D} tensor (i.e., $\Gamma_{21i}, \Gamma_{i21}, \Gamma_{1i2}$ and $\Gamma_{ijk}$) do not contain $B^i(u,x^p)$ or its derivatives.

To consider the effect of the frame derivatives involved in the covariant derivatives of the curvature tensor, we examine the components of $R_{abcd}$ and the connection coefficients (with the exception of $\Gamma_{122}$ which will not appear in any component $R_{abcd;e}$). As these are independent of $v$, the modified functions $\hn$ and $\wn_i$ will not appear in  any frame derivatives, and $B^i$ can only appear as coordinate derivatives of $x^p$ in ${\bf e}_2$. However, frame derivatives with respect to ${\bf e}_2$ will only occur in the components of the covariant derivatives of the curvature tensor with non-zero b.w., which must vanish due to the type ${\bf D}^k$ condition. 
 
Thus, the function $B^i(u,x^p)$ does not appear in any Cartan invariant at any order. This implies that the type ${\bf D}^k$ metric is equivalent to a type ${\bf D}^k$ metric where the null rotation from the Kundt frame to the aligned null vector field $\tilde{\boldn}$ does not require $B^i$. 


\end{proof}

Since the curvature components in the original frame are polynomials in $v$, if a non-trivial null rotation is required to align $\tilde{\boldn}$, the choice of null rotation parameters must depend on $v$. We will determine the form of the null rotation parameters, $B^i$ using the following two lemmas.

\begin{lem} \label{lem:KundtPoly}
For any degenerate Kundt spacetime, the covariant derivative of the curvature tensor of order $m$ can be decomposed into b.w. terms up to b.w. $-(m+2)$, and any component of b.w. $-p$ is a polynomial of degree $p$ in $v$.
\end{lem}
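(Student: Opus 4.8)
The plan is to prove the two assertions by different routes: the bound on the boost weight (b.w.) comes from the index symmetries of the curvature tensor alone, whereas the polynomial degree in $v$ uses the Kundt coordinate structure and is obtained by induction on $m$.

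First I would settle the b.w.\ range. The tensor $\nabla^{(m)}(\mathrm{Riem})$ is antisymmetric in its $(m+1,m+2)$ and $(m+3,m+4)$ index pairs, because these symmetries of $R_{abcd}$ are identities between tensor fields and so survive covariant differentiation. Hence a nonzero frame component can carry at most one index along $\boldn$ within each of the two curvature pairs, together with at most one such index in each of the remaining $m$ differentiation slots, i.e.\ at most $m+2$ indices along $\boldn$ in total. Since each index along $\boldn$ lowers the b.w.\ by one while no other index lowers it, every nonzero component has b.w.\ $\geq -(m+2)$; the upper bound $0$ is precisely the degenerate Kundt hypothesis (type {\bf II} to all orders, Section~\ref{intro}). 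Thus the b.w.\ of $\nabla^{(m)}(\mathrm{Riem})$ runs over $\{0,-1,\dots,-(m+2)\}$.

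For the degree statement I would work in the canonical coordinates of a degenerate Kundt metric \cite{CHP2010,PS2015}, in which $\bl=\partial_v$, $g_{pq}=g_{pq}(u,x)$, $W_p$ is at most linear in $v$ (cf.\ \eqref{WBini}, which already follows from the vanishing of the positive-b.w.\ component $R_{121i}$) and $2H$ is at most quadratic in $v$ (cf.\ \eqref{Hini}), and in the natural null frame with no null rotation ($B^i=0$; cf.\ \eqref{nat null frame}). Attach to each frame quantity $X$ its b.w.\ $w(X)$ and its degree $d(X)$ as a polynomial in $v$. Then $\mathbf{e}_1=\partial_v$ lowers $d$ by one and raises $w$ by one; $\mathbf{e}_2=\partial_u-H\partial_v$ raises $d$ by at most one (as $H$ is quadratic) and lowers $w$ by one; and $\mathbf{e}_i=m_i^{~p}\partial_{x^p}-W_i\partial_v$ leaves $d$ unchanged (as $W_i$ is linear) and leaves $w$ unchanged. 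In every case a frame derivative preserves the joint property $d(X)\leq -w(X)$, $w(X)\leq 0$ (the second inequality being the degenerate Kundt hypothesis again). Inspecting Cartan's structure equations \eqref{dtheta1}, \eqref{dmi} and \eqref{bw0Riem} specialized to $B^i=0$, together with the stated degrees of $g_{pq},W_p,H$, one checks that every nonzero connection coefficient obeys $d(\Gamma)\leq -w(\Gamma)\leq 0$, and hence so does $\Gamma^c_{~ab}=g^{cd}\Gamma_{dab}$ since the inverse metric is constant; the same inspection of \eqref{bw0Riem} gives the base case $m=0$, i.e.\ $d(R_{abcd})\leq -w(R_{abcd})\leq 0$ for every frame component.

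The induction then closes at once: a frame component of $\nabla^{(m+1)}(\mathrm{Riem})$ equals $\mathbf{e}_{a_0}\!\big((\nabla^{(m)}\mathrm{Riem})_{a_1\cdots}\big)$ minus connection terms $\Gamma^{c}_{~a_0 a_j}(\nabla^{(m)}\mathrm{Riem})_{a_1\cdots c\cdots}$; the first piece is controlled by the behaviour of the frame vectors above, while in each connection term the $v$-degrees and the boost weights add and both factors satisfy $d\leq -w$ and $w\leq 0$, so the product does too. Hence every component of $\nabla^{(m+1)}(\mathrm{Riem})$ has $d\leq -w\leq 0$, i.e.\ the b.w.-$(-p)$ part is a polynomial of degree at most $p$ in $v$, as claimed. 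The main obstacle is precisely this input and its bookkeeping: one must know that for a \emph{general} degenerate Kundt metric (not merely a type ${\bf D}^k$ one) $W_p$ is at most linear and $2H$ at most quadratic in $v$, and then check that the one frame direction that genuinely raises the $v$-degree, $\mathbf{e}_2$, does so by exactly the amount cancelled by its unit drop in b.w., and that this balance is respected by every connection coefficient — so that the induction closes with the sharp exponents $-(m+2)$ and $p$ rather than the weaker bounds a crude estimate would give.
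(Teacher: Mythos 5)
Your proof is correct, and for the polynomial-degree claim it is essentially the paper's argument: an induction on the order of differentiation that tracks the joint invariant $d(X)\le -w(X)$ using exactly the same inputs (transverse metric $v$-independent, $W_i$ linear, $H$ quadratic; ${\bf e}_1$ lowers degree and raises b.w., ${\bf e}_2$ raises degree by at most one and lowers b.w. by one, ${\bf e}_i$ preserves both; connection coefficients of b.w. $-q$ have degree $\le q$). Where you genuinely differ is the boost-weight floor $-(m+2)$: the paper obtains it by listing the symbolic products $(\Gamma)_q(T)_{q'}$ that can occur and observing that the only b.w. $-2$ connection coefficients ($\Gamma^1_{~2i}$) never multiply the lowest-weight components, whereas you deduce it purely from the surviving pair antisymmetries of $R_{abcd;f_1\cdots f_m}$, which cap the number of $\boldn$-indices at $m+2$. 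Your route is cleaner and frame-robust, and it decouples the combinatorial bound from the Kundt structure; the paper's route has the minor advantage of exhibiting explicitly which terms populate each b.w. level, which it reuses in later propositions. One slip to fix: the chains ``$d(\Gamma)\le -w(\Gamma)\le 0$'' and ``$d(R_{abcd})\le -w(R_{abcd})\le 0$'' are wrong as written (for a b.w. $-2$ component $-w=2>0$); you mean the two separate conditions $d\le -w$ and $w\le 0$, which is what you state correctly when introducing the invariant, so the argument itself is unaffected.
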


\begin{proof}
Since the metric function $H$ is quadratic in $v$ and the one-form components $W_i$ are linear in $v$, the proof follows by induction using the form of the connection coefficients relative to the Kundt frame.

Denoting the Riemann tensor as ${\bf R}$, for $m=0$ this is trivially true, while for $m=1$ the components of $\nabla {\bf R}$ can be decomposed as combinations of:

\begin{itemize}
\item b.w. 0: $(R)_0 (\Gamma)_0, D(R)_{-1}, m_i(R)_0$,
\item b.w. $-1$: $(\Gamma)_0 (R)_{-1}, (\Gamma)_{-1} (R)_0, D(R)_{-2}, m_i(R)_{-1}, {\bf e}_e(R)_0$,
\item b.w. $-2$: $(\Gamma)_0(R)_{-2}, (\Gamma)_{-1}(R)_{-1}, (\Gamma)_{-2} (R)_0, m_i (R)_{-2}, {\bf e}_e(R)_{-1}$,
\item b.w. $-3$: $(\Gamma)_{-1}(R)_{-2}, (\Gamma)_{-2} (R)_{-1} {\bf e}_e (R)_{-2}$,
\item b.w. $-4$: $(\Gamma)_{-2}(R)_{-2}$,
\end{itemize}
\noindent where $(T)_q$ denotes the components of a tensor ${\bf T}$ of b.w. $q$. The only components of the connection coefficients with b.w. $-2$ are $\Gamma^1_{~2i}$ which do not appear in combination with any component in the set $(R)_{-2}$, and so the b.w. $-4$ components do not appear. Writing the degree of a polynomial function of $v$ as $deg_v(~)$, we can confirm the inductive assumption by noting that $deg_v (\Gamma)_{0} = 0, deg_v (\Gamma)_{-1} = 1$ and $deg_v (\Gamma)_{-2} = 2$, with the action of the frame derivatives ${\bf e}_1$ and ${\bf e}_2$ on components, respectively, decrease and increase the degree of $v$ by one.

Assuming this holds for $m$, we will now prove for $m+1$. Writing ${\bf T} = \nabla_1 \hdots \nabla_{m+1} {\bf R}$ then the b.w. decomposition is

\begin{itemize}
\item b.w. 0: $(T)_0 (\Gamma)_0, D(T)_{-1}, m_i(T)_0$

\hspace{ 2 mm} {\vdots}

\item b.w. $-(m+1)$: $(\Gamma)_{-1} (T)_{-m}, (\Gamma)_{-2} (T)_{-m+1}, {\bf e}_2(T)_{-m}$.
\item b.w. $-(m+2)$:  $(\Gamma)_{-2} (T)_{-m}$.

\noindent Again, the formula for the covariant derivative of a type {\bf II} tensor ${\bf T}$  ensures there are no $(\Gamma)_{-2}$ components that combine with the $(T)_{-m}$ components, and so the most negative b.w. possible is $-(m+1)$ and $$deg_v (T)_{-(m+1)} = m+1.$$
\end{itemize}

\end{proof}

\noindent To continue we will introduce the concept of a {\it curvature matrix} and a {\it curvature vector} which are, respectively, a rank two tensor and a rank one tensor constructed from contractions of tensor products of the curvature tensor and its covariant derivatives. For both objects we will consider representations in terms of differential forms relative to the frame.

\begin{lem} \label{lem:Kundt:Operator}

For a degenerate Kundt spacetime, the components of b.w. $-p$ of any curvature matrix or curvature vector is a polynomial of degree $p$ in $v$.
\end{lem}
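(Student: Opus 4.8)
The plan is to reduce Lemma~\ref{lem:Kundt:Operator} to Lemma~\ref{lem:KundtPoly} by tracking how the polynomial-degree-in-$v$ bound behaves under the two operations used to build a curvature matrix or curvature vector: (i) taking tensor products of the curvature tensor and its covariant derivatives, and (ii) contracting indices with the metric. Since every curvature matrix or curvature vector is, by definition, a finite sum of such contracted tensor products, it suffices to establish the degree bound for a single contracted monomial and then note that summing polynomials of degree $\le p$ in a fixed b.w.\ $-p$ slot yields a polynomial of degree $\le p$.

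First I would set up the bookkeeping for tensor products. If ${\bf S}$ and ${\bf T}$ are tensors each of which is a covariant derivative of the Riemann tensor (possibly of order zero), then by Lemma~\ref{lem:KundtPoly} the b.w.\ $-q$ component of ${\bf S}$ is a polynomial of degree $q$ in $v$ (and the b.w.\ is bounded below by $-(m_S+2)$, above by $+2$ in the degenerate Kundt frame; but only non-positive b.w.\ components survive), and similarly for ${\bf T}$. A frame component of ${\bf S}\otimes{\bf T}$ of total b.w.\ $-p$ is a sum of products $(S)_{-q}(T)_{-r}$ with $q+r=p$, each of which has $v$-degree $\le q+r = p$. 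So the tensor product of two type {\bf II} curvature-built tensors again satisfies ``b.w.\ $-p$ $\Rightarrow$ degree $\le p$ in $v$'', and by induction the same holds for any finite tensor product. The key structural input, already exploited in the proof of Lemma~\ref{lem:KundtPoly}, is that the b.w.\ and the $v$-degree are controlled by the \emph{same} counting, so multiplicativity of degree under products matches additivity of b.w.\ under products.

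Next I would handle contractions. Contracting an upper index with a lower index using $g^{ab}$ means pairing a $0$ with a $1$ frame index (which together contribute b.w.\ $+1+(-1)=0$, i.e.\ b.w.-neutral) via $g^{01}=-1$, or pairing an $i$ with a $j$ via $\delta^{ij}$ (also b.w.-neutral), or — and this is the only subtle case — pairing a $0$ with a $0$ or $1$ with $1$ via $g^{00}$ or $g^{11}$. But in the coframe \eqref{nat null coframe}–\eqref{NullRotV} the contravariant metric has $g^{11}=0$ and $g^{00}=g^{vv}$, which in the Kundt frame equals $-2H+g^{pq}w_pw_q$, a polynomial of degree $2$ in $v$ and of b.w.\ $-2$. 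Thus contracting two b.w.-$0$ slots through $g^{00}$ lowers b.w.\ by $2$ and, consistently, can raise the $v$-degree by at most $2$; contracting through the b.w.-neutral pieces changes neither. Hence each contraction preserves the invariant ``b.w.\ $-p$ component has $v$-degree $\le p$''. Combining the product step and the contraction step, any curvature matrix or curvature vector — being a finite sum of fully contracted curvature monomials — inherits the bound, and restricting to rank-two or rank-one output tensors does not affect the argument.

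The main obstacle I anticipate is making the contraction argument airtight in the \emph{tilded}, null-rotated frame \eqref{NullRotV} rather than only in the original Kundt frame, since the curvature matrix/vector and the b.w.\ decomposition must be taken with respect to the frame in which the spacetime is type~${\bf D}^k$. One must check that the relevant contravariant metric components in the tilded frame retain the needed degree/b.w.\ matching — in particular that $\tilde g^{00}=\tilde g^{vv}$ is still at most quadratic in $v$ and of b.w.\ $-2$ — which follows from \eqref{WBini}, \eqref{Hini}, and the fact (to be used from Lemma~\ref{lem:KundtPoly} and the surrounding discussion) that the null rotation parameters $B^i$, and hence $\hn=H+\tfrac{|B|^2}{2}$ and $\wn_i=W_i-B_i$, enter $\tilde g^{vv}$ with $v$-degree no worse than that already present. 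A secondary technical point is to confirm that $\Gamma_{122}$, the lone connection coefficient excluded in the proof of Lemma~\ref{lem:KundtPoly} because it never appears in $R_{abcd;e}$, likewise does not resurface through the contractions; but since curvature matrices and vectors are built purely from curvature components (not from bare connection coefficients), this is immediate.
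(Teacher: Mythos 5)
Your proposal is correct and takes essentially the same route as the paper: decompose a contracted monomial's b.w.\ $-p$ component into products of curvature factors whose boost weights sum to $-p$, invoke Lemma~\ref{lem:KundtPoly} so that $v$-degree adds exactly as b.w.\ does, and use the rank of the output to cap $p$ at $2$ (matrices) or $1$ (vectors). The one slip is in your contraction step, where you conflate coordinate and frame components: the contractions defining a curvature matrix or vector are performed with the \emph{frame} metric \eqref{etametric}, whose components are the constants $\eta^{12}=-1$, $\eta^{ij}=\delta^{ij}$ with $\eta^{11}=\eta^{22}=0$ (in the original Kundt frame and, since null rotations are Lorentz transformations, in the tilded frame \eqref{NullRotV} as well), so there is no $v$-dependent $g^{00}=g^{vv}$ to track, every contraction is b.w.- and degree-neutral, and your anticipated obstacle about $\tilde g^{vv}$ evaporates — a harmless detour, since the invariant you need is preserved either way.
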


\begin{proof}
Given any tensor product from copies of $R_{abcd}$ and $R_{abcd;f_1,\hdots f_m}$ $m \geq 1$ with an even number of indices, only the components with b.w. 0, $-1$ and $-2$ remain after contracting indices to produce a curvature matrix. The components of the rank two tensor with fixed b.w. can be decomposed into products of the original curvature tensors of particular b.w. in a symbolic manner as:
\begin{itemize}
\item b.w. 0 : (b.w. 0),
\item b.w. -1 : (b.w. 0) (b.w. -1),
\item b.w. -2 : (b.w. 0) (b.w. -2) and (b.w. -1) (b.w. -1).
\end{itemize}

\noindent Then lemma \ref{lem:KundtPoly} implies that the  components of the original tensors with b.w. 0, $-1$ and $-2$ are polynomials of order 0, 1 and 2 in $v$. In the case of curvature vectors, the proof is similar except that we instead consider tensor products involving the curvature tensor and its derivatives with an odd number of indices, and lemma \ref{lem:CCoefs} is still applicable for the b.w. -1 components. 
\end{proof}

With these two lemmas we can determine the form of each null rotation parameter $B^i$.

\begin{prop} \label{prop:Bquad0}
If a spacetime is of type ${\bf D}^k$, then the non-zero null rotation parameters needed to align $\tilde{\boldn}$ are of the form: $$B^i = G(u,v,x^p) \phi^i(u,x^p) + B^{(1)~i}(u,x^p)v.$$
\end{prop}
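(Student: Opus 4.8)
\noindent The plan is to use the observation that, for a type $\mathbf{D}^k$ spacetime, every tensor obtained from the Riemann tensor and its covariant derivatives by tensor products and contractions is again of type $\mathbf{D}$ in the frame $(\tilde{\boldl},\tilde{\boldn},\tilde{\boldm}^i)$; in particular $\tilde{\boldn}$ is an aligned null direction for every curvature matrix, for every curvature vector, and for every $\nabla^{(k)}(Riem)$. The idea is to turn each alignment requirement into an equation for the null rotation parameters $B^i$ and then read off their $v$-dependence with the help of the polynomial bounds in Lemmas \ref{lem:KundtPoly} and \ref{lem:Kundt:Operator}.

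First I would write these conditions in the original Kundt frame, where only the non-positive b.w.\ components of the curvature are present. For a curvature vector $V$ (whose b.w.\ $+1$ part $V_1$ therefore vanishes), the requirement that $\tilde{\boldn}$ be aligned reads $V_2+B^iV_i=0$, in which $V_i$ is independent of $v$ and $V_2$ is linear in $v$ by Lemma \ref{lem:Kundt:Operator}. For a curvature matrix $M$ the analogous requirement is $M_{2j}+M_{12}B^j+M_{ij}B^i=0$, whose linear-in-$B$ part is controlled by the $v$-independent b.w.\ $0$ data $M_{12}$, $M_{ij}$, whose source $M_{2j}$ is linear in $v$, and which comes with a compatible quadratic-in-$B$ equation $\sim M_{22}$ from the b.w.\ $-2$ component. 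In addition, because ${\bf e}_2=\tilde{\boldn}$ contains the term $B^im_i^{\ p}\partial_{x^p}$ and the surviving connection coefficients $\Gamma_{i21}$ and $\Gamma_{1i2}$ contain $\partial_v B_i$, the vanishing of the negative-b.w.\ components of $\nabla^{(k)}(Riem)$ yields, at every order, linear relations among $B$, $\partial_v B$ and $\partial_v^2 B$ whose coefficients are the ($v$-independent) b.w.\ $0$ curvature and connection data and whose inhomogeneities are linear in $v$. Using $\boldl={\bf e}_1=\partial_v$ and the $v$-independence of the transverse metric, these relations should constrain the $v$-dependence of $B^i$ to the assertion that, modulo a term linear in $v$, $B^i$ is a single scalar function of $(u,v,x^p)$ times a fixed $v$-independent transverse vector field $\phi^i(u,x^p)$.

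Establishing precisely this rigidity is the main obstacle: one must rule out that the $v$-dependent freedom the alignment equations leave in $B^i$ spans two or more independent transverse directions. This is where the $\mathcal I$-nondegeneracy of type $\mathbf{D}^k$ spacetimes is essential, since freedom along two independent transverse directions would force the entire curvature jet to be degenerate along both to all orders; the contradiction should be extracted from the quadratic-in-$B$ b.w.\ $-2$ conditions and their $v$-derivatives, from the full tower of vanishing-negative-b.w.\ conditions for $\nabla^{(k)}(Riem)$ (which is present at every order by Lemma \ref{lem:KundtPoly}), and from Theorem \ref{DKthrm} together with Lemma \ref{Dklem}. Once $\phi^i(u,x^p)$ and the scalar $G(u,v,x^p)$ have been isolated so that $B^i=G\,\phi^i+(\text{linear in }v)$, the $v$-independent part of the remaining linear term is either already proportional to $\phi^i$ (and absorbed into $G\phi^i$) or removed by a $v$-independent change of the transverse coordinates, by the same reasoning as in Proposition \ref{prop:NoB}; this gives $B^i=G(u,v,x^p)\,\phi^i(u,x^p)+B^{(1)i}(u,x^p)\,v$. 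Degenerate situations — all curvature matrices and vectors vanishing, the linear-in-$B$ systems collectively nondegenerate so that $B^i$ is already forced to be linear in $v$, or a locally homogeneous spacetime with $v$-constant transverse curvature — would be treated separately: respectively by working directly with the $\nabla^{(k)}(Riem)$ equations, by observing that the statement is then immediate, and by appeal to the classification of locally homogeneous type $\mathbf{D}^k$ spacetimes \cite{SM2018}.
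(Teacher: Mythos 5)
Your setup is the same as the paper's --- impose that every curvature matrix and curvature vector is of type {\bf D} in the rotated frame, exploit the polynomial degree bounds of Lemmas \ref{lem:KundtPoly} and \ref{lem:Kundt:Operator}, and strip off the $v$-independent part of $B^i$ by the argument of Proposition \ref{prop:NoB}. But the proposal stops exactly at the step that carries the content of the proposition: you name the rigidity of the $v$-nonlinear part of $B^i$ (that it spans a single transverse direction) as ``the main obstacle'' and say the contradiction ``should be extracted from'' the b.w.\ $-2$ conditions, the derivative tower and $\mathcal{I}$-non-degeneracy, without actually extracting it. That is a genuine gap, not a detail.

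The paper closes it with one concrete move: apply $\partial_v^2$ to the b.w.\ $-1$ alignment condition $\tilde{R}_{2i}=R_{2i}+R_{12}B_i+2R_{ij}B^j=0$ of an arbitrary curvature matrix. By Lemma \ref{lem:Kundt:Operator} the inhomogeneity $R_{2i}$ has degree one in $v$ and is annihilated, while the b.w.\ $0$ blocks $R_{12}$, $R_{ij}$ are $v$-independent, so one is left with the exact eigenvalue equation $R_{ij}B^j_{,vv}=-\tfrac12 R_{12}\,B_{i,vv}$ holding simultaneously for every curvature matrix; the analogous conditions for curvature vectors give $B^i_{,vv}T_i=0$, and the residual case is handled via the eigen-bivectors of the Weyl operator. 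It is this simultaneous eigenvector/orthogonality requirement --- not $\mathcal{I}$-non-degeneracy --- that forces $B^i_{,vv}$ to be a single scalar function times one common direction $\phi^i(u,x^p)$; integrating twice in $v$ and discarding $B^{(0)\,i}$ as in Proposition \ref{prop:NoB} then yields the stated form. Your proposal hints at $\partial_v^2 B$ entering through connection coefficients, but never performs the double $v$-differentiation of the algebraic alignment conditions, which is what isolates $B^i_{,vv}$ and decouples it from the unknown negative-b.w.\ source terms.
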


\begin{proof}

If a spacetime is of type ${\bf D}^k$, then there is a null rotation such that {\bf all} curvature matrices are of type {\bf D}. Using \eqref{NullRot}, the b.w. $-1$ and $-2$ components of an arbitrary curvature matrix, $R_{ab}$, are then:
\beq \begin{aligned}  \tilde{R}_{2i} &= R_{2i} + R_{12} B_i + 2R_{ij} B^j = 0, \\
 \tilde{R}_{22} &= R_{22} + \frac12 R_{12} |B|^2 + R_{2i} B^i + R_{ij}B^i B^j = 0. \end{aligned} \nonumber \eeq

\noindent From lemma \ref{lem:Kundt:Operator}, differentiating the components of $\tilde{R}_{2i}$ twice with respect to ${\bf e}_1 = \partial_v$ will cause $R_{2i}$ to vanish and give the following condition:
\beq R_{ij} B^j_{~,vv} = -\frac12 R_{12} B_{i,vv}. \nonumber \eeq

\noindent Thus, $B^i_{~,vv}$ is either an eigenvector of $R_{ij}$ or $B^1_{~,vv}$ vanishes. If $B^i_{,vv}$ is an eigenvector for {\bf all} curvature matrices then for each curvature matrix, the components are functionally dependent on $u$ and $x^p$. 

When all the curvature matrices are already of type {\bf D} form relative to the original frame, we can instead consider an arbitrary curvature vector, $T_a$, and determine a frame for which all curvature vectors are of type {\bf D} giving the condition: 
\beq 0 = B^i_{,vv} T_i. \nonumber \eeq
\noindent Either $B^i_{~,vv} = 0$ or $B^i_{~,vv}$ is orthogonal to all curvature vectors and it is dependent on $u$ and $x^i$ alone.

If the whole set of curvature matrices and vectors are of type {\bf D}, then we can treat the Weyl tensor as an operator and use the eigen-bivectors as invariant quantities. In such a case, either the frame is aligned with the Weyl tensor and it follows that the curvature tensor and its covariant derivatives are already of type {\bf D} form, or we can differentiate the null-rotated bivectors relative to the original frame to give $B^i_{,vv} \bm^i  = 0$ and the same argument can be repeated.
 
Thus, $B^i_{~,vv}$ must be of the form,
\beq B^i_{~,vv} = G(u,v,x^p) \phi^i(u,x^p), \nonumber \eeq

\noindent where $\phi^i$ is an eigenvector for all curvature matrices. Integrating twice we find 
\beq B^i = G(u,v,x^p) \phi^i(u,x^p) + B^{(1)~i}(u,x^p)v + B^{(0)~i}(u,x^p). \label{Biphi} \eeq

\noindent In analogy with the proof of proposition \ref{prop:NoB}, since the components $B^{(0)~i}$ never appear in any Cartan invariant, this implies that there is a coordinate transformation to a new Kundt frame where the required null rotation parameters have $B^{(0)~i} = 0$.

\end{proof}

Assuming $B^i \neq 0$ with $B^i_{,vv} \neq 0$, the type ${\bf D}^k$ condition imposes an additional constraint on the transverse metric, $g_{ij}$, and the metric functions $H$ and $W_i$:

\begin{prop} \label{prop:mi}
If a non-trivial null rotation with $B^i_{,v} \neq 0$ is required to align $\boldn$ with the type ${\bf D}^k$ frame, 
then the vector field $B^i m_i$ annihilates all b.w. 0 components of the curvature tensor and its covariant derivatives under frame differentiation.
\end{prop}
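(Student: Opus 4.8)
The plan is to work in the null-rotated frame $(\tilde{\bl},\tilde{\boldn},\tilde{\boldm}^i)$ and exploit the fact that, by hypothesis, after the rotation \emph{every} curvature matrix and curvature vector (hence also the full Riemann tensor and its covariant derivatives, viewing them via their b.w.\ $0$ ``blocks'') is of alignment type $\mathbf{D}$. The key observation is that $B^i m_i$ is, up to the leading term in $v$, the direction in which $\tilde{\boldn}$ differs from $\boldn$; so differentiating type $\mathbf{D}$ objects along $\tilde{\boldn}$ versus $\boldn$ and comparing will force a constraint that can be repackaged as the annihilation statement. Concretely, I would first recall from Proposition~\ref{prop:Bquad0} and equation~\eqref{Biphi} that $B^i = G\phi^i + B^{(1)i}v$ with $\phi^i$ a common eigenvector of all curvature matrices, so $B^i m_i$ is (pointwise, for fixed $u,x^p$) proportional to the fixed transverse vector $\phi^i m_i$ together with a piece linear in $v$ along $W^1_i m^i$; the $v$-linear piece will be handled separately since it is already encoded in the Kundt data.

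The core step is the following. Apply $\partial_v = {\bf e}_1$ twice to the vanishing b.w.\ $-1$ component $\tilde{R}_{2i}=R_{2i}+R_{12}B_i+2R_{ij}B^j=0$ of an arbitrary curvature matrix; by Lemma~\ref{lem:Kundt:Operator} the terms $R_{2i}$ (b.w.\ $-1$, degree $1$ in $v$) are killed, leaving $R_{ij}B^j_{,vv}=-\tfrac12 R_{12}B_{i,vv}$, i.e.\ $R_{ij}(\phi^j G_{,vv})=-\tfrac12 R_{12}\phi_i G_{,vv}$, which says $\phi^i$ is an eigenvector of $R_{ij}$ with eigenvalue $-\tfrac12 R_{12}$ — consistent with Proposition~\ref{prop:Bquad0}. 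Next I feed this back into $\tilde{R}_{22}=R_{22}+\tfrac12 R_{12}|B|^2+R_{2i}B^i+R_{ij}B^iB^j=0$; using $R_{ij}B^j=-\tfrac12 R_{12}B_i - R_{2i}$ the quadratic terms collapse and one gets a relation forcing $R_{22}$ and $R_{2i}$ (the negative b.w.\ components in the \emph{original} frame) to be rigidly determined by the b.w.\ $0$ data and $B^i$. The point I want to extract is: the vanishing of \emph{all} negative-b.w.\ components in the tilded frame, for every curvature matrix/vector simultaneously, means that the null rotation generated by $B^i m_i$ maps the b.w.\ $0$ components to themselves — equivalently, the ``extra'' terms produced by the rotation in the b.w.\ $0$ slots must cancel. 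Writing out how a null rotation about $\bl$ acts on a b.w.\ $0$ component $R_{0}$ of a type $\mathbf{II}$ tensor, the change is a sum of terms each containing $B^i$ contracted into a b.w.\ $-1$ or b.w.\ $-2$ component; demanding that the \emph{new} tensor again have all these lower-b.w.\ pieces vanishing, to all orders in the covariant derivative hierarchy, is precisely the assertion that the derivation $B^i m_i$ (acting componentwise in the frame, i.e.\ as a frame derivative on the scalar frame components, which is the meaning of ``annihilates under frame differentiation'') sends every b.w.\ $0$ frame component of $\nabla^{(k)}\mathrm{Riem}$ to zero.

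To make ``annihilates under frame differentiation'' precise and then prove it, I would proceed by induction on the order $k$ of the covariant derivative, mirroring the inductive structure of Lemma~\ref{lem:KundtPoly} and the argument in Proposition~\ref{prop:NoB}. At order $0$: the b.w.\ $0$ Riemann components are listed in~\eqref{bw0Riem}; one checks directly (using~\eqref{WBini}, \eqref{Hini} and the $v$-polynomial bookkeeping of Lemma~\ref{lem:KundtPoly}) that applying ${\bf e}_j$-type derivatives weighted by $B^j$ and using the eigenvector property of $\phi^i$ plus the already-established constraints on $R_{22},R_{2i}$ gives $B^j{\bf e}_j(R_0)=0$ for each b.w.\ $0$ component $R_0$. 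For the inductive step, write ${\bf T}=\nabla^{(k)}\mathrm{Riem}$; the b.w.\ $0$ part of $\nabla {\bf T}$ is a combination $({\bf T})_0(\Gamma)_0$, $D({\bf T})_{-1}$, $m_i({\bf T})_0$ (from the Lemma~\ref{lem:KundtPoly} list), and the frame derivative ${\bf e}_2$ only enters the negative-b.w.\ pieces, which vanish by the type $\mathbf{D}^k$ condition; since $B^i m_i$ only acts through the coordinate-derivative part $B^i m_i^{~p}\partial_{x^p}$ sitting inside ${\bf e}_2$, and that piece only shows up in slots already forced to vanish, the inductive hypothesis $B^i m_i$-annihilates $({\bf T})_0$ propagates to $({\nabla\bf T})_0$. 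The main obstacle I anticipate is the careful bookkeeping at the inductive step: I must verify that the connection coefficients $\Gamma_{21i},\Gamma_{i21},\Gamma_{1i2},\Gamma_{ijk}$ appearing in the b.w.\ $0$ part of $\nabla^{(k+1)}\mathrm{Riem}$ are themselves $B^i m_i$-annihilated (which follows from~\eqref{GammaS}--\eqref{GammaE} together with the linearity of $W_i,B_i$ in $v$ and the eigenvector property), and that the frame-derivative commutators $[{\bf e}_i,{\bf e}_j]$, $[{\bf e}_2,{\bf e}_i]$ do not generate new ${\bf e}_2$-derivative contributions into the b.w.\ $0$ sector — this is where the constraints on $g_{ij}$, $H$, and $W_i$ promised in the proposition statement will have to be invoked, so I would phrase the conclusion as: the demand that $[B^i m_i,\,\cdot\,]$ close consistently on the b.w.\ $0$ tower forces exactly those constraints on the transverse metric and the metric functions.
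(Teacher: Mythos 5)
There is a genuine gap at the heart of your argument: you never isolate $B^i\bm_i$ from the full frame derivative ${\bf e}_2$. The paper's proof is short and direct. Every b.w.\ $0$ component $T$ of $\nabla^{(m)}{\bf R}$ is independent of $v$ (Lemma \ref{lem:KundtPoly}), and the vanishing of $\Gamma_{22i},\Gamma_{2ij},\Gamma_{i2j}$ collapses the covariant derivative along $\tilde{\boldn}$ to the frame derivative, so the type ${\bf D}^k$ condition at order $m+1$ forces ${\bf e}_2(T)=T_{,u}+B^im_i^{~p}T_{,p}=0$. But this only kills the \emph{combination}, not $B^im_i^{~p}T_{,p}$ by itself. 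The decisive step --- absent from your proposal --- is to differentiate this relation once and twice with respect to $v$: since $T_{,u}$ and $T_{,p}$ are $v$-independent while $B^i$ is not, one gets ${\bf X}(T)=0$ and ${\bf Y}(T)=0$ for ${\bf X}=B^i_{,v}\bm_i$ and ${\bf Y}=B^i_{,vv}\bm_i$, and then the explicit form $B^i=G\phi^i+B^{(1)i}v$ from Proposition \ref{prop:Bquad0} exhibits $B^i\bm_i$ as a $v$-dependent linear combination of ${\bf X}$ and ${\bf Y}$, whence $B^i\bm_i(T)=0$ by linearity. You do deploy $v$-differentiation, but only on the algebraic null-rotation formulas for $\tilde{R}_{2i}$ and $\tilde{R}_{22}$ (which merely re-derives the eigenvector content of Proposition \ref{prop:Bquad0}); those pointwise constraints on $R_{22},R_{2i}$ cannot yield the differential statement $B^im_i^{~p}\partial_pT=0$, so the base case of your induction (``one checks directly \dots'') is asserted rather than proved.

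Two further issues. The induction on the derivative order is unnecessary and manufactures the very obstacles you flag --- annihilation of $(\Gamma)_0$ and of $D(T)_{-1}$, control of the commutators $[{\bf e}_i,{\bf e}_j]$ and $[{\bf e}_2,{\bf e}_i]$ --- none of which arise in the paper's argument, which is applied independently at each order $m$ by reading off the b.w.\ $-1$ slot of $\nabla^{(m+1)}{\bf R}$. And your closing reformulation misidentifies the target: the proposition asserts the annihilation property itself; the constraints on $g_{pq}$, $H$ and $W_i$ are consequences drawn afterwards (e.g.\ Corollary \ref{DKcor} and Proposition \ref{prop:Bquad}), not what is to be established here.
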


\begin{proof}

Any component of the $m$-th covariant derivative of the curvature tensor, $\nabla_1 \hdots \nabla_m {\bf R}$, relative to the aligned frame is of b.w. 0, and so the covariant derivative with respect to $\tilde{\boldn}$ must vanish in order to satisfy the type ${\bf D}^k$ condition. Furthermore, the vanishing of the connection coefficients $\Gamma_{22i}, \Gamma_{2ij}$ and $\Gamma_{i2j}$ implies that the covariant derivative with respect to $\tilde{\boldn}$ is in fact the frame derivative:

\beq {\bf e}_2 (\nabla_1 \hdots \nabla_m {\bf R}) =  [ \nabla_1 \hdots \nabla_m {\bf R}]_{,u} + B^i m_i^{~p} [\nabla_1 \hdots \nabla_m {\bf R}]_{,p} = 0. \nonumber \eeq

\noindent  Differentiating each component with respect to $D = \partial_v$ we find that the vector fields,
$$ {\bf X} = B^i_{,v} \bm_i \text{ and } {\bf Y} = B^i_{,vv} \bm_i,$$ 
\noindent will both annihilate $\nabla_1 \hdots \nabla_m {\bf R}$. Due to the form of $B^i$,
$$B^i = G(u,v,x^p) \phi^i + B^{(1)i}(u,x^p) v, $$
\noindent  we have the following expression in terms of {\bf X} and {\bf Y}:
\beq B^i \bm_i = \left[ {\bf X} - \frac{G_{,v}}{G_{,vv}} {\bf Y} \right] v + \frac{G}{G_{,vv}} {\bf Y}. \nonumber \eeq
\noindent Note that if $G_{,vv} = 0$ then ${\bf Y} = 0$ and $B^i m_i = v {\bf X}$. Thus by linearity $B^i \bm_i$ will annihilate all b.w. zero components, i.e., $$B^i \bm_i (\nabla_1 \hdots \nabla_m {\bf R}) = 0.$$ 
\end{proof}

\begin{cor} \label{DKcor}
If a metric is of type ${\bf D}^k$,  then there exists (locally) a coordinate transformation $(v',u',x^{'p}) = (v,u, f^p(u;x^q))$ such that 
\beq g_{rs} = g'_{pq} \frac{\partial f^p }{\partial x^r} \frac{\partial f^q }{\partial x^s},~~g_{pq,u} = 0. \nonumber \eeq
\end{cor}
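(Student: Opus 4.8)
The plan is to show that a type ${\bf D}^k$ metric can always be put into a frame in which the transverse metric $g_{pq}$ is $u$-independent, and then to translate this into the existence of the asserted change of the spatial coordinates. I would organize the argument in two stages. First, I would use Proposition \ref{prop:mi} together with the structure equations of Section 3 to pin down the $u$-dependence of $g_{pq}$. By Proposition \ref{prop:NoB} and Proposition \ref{prop:Bquad0}, we may assume the $B^i$ have the form $G(u,v,x)\,\phi^i(u,x)+B^{(1)i}(u,x)v$, with $B^{(0)i}=0$; in the case $B^i=0$ the aligned frame coincides with the natural Kundt frame \eqref{nat null frame}. I would look at the vanishing b.w.\ components $R_{ij2k}={\bf e}_2(\Gamma_{ikj})=0$ listed after \eqref{bw0Riem}, and at the analogous conditions coming from the higher covariant derivatives of ${\bf R}$, namely ${\bf e}_2(\nabla_1\cdots\nabla_m{\bf R})=0$, which (since $\Gamma_{22i},\Gamma_{2ij},\Gamma_{i2j}$ vanish) reduce to $[\nabla_1\cdots\nabla_m{\bf R}]_{,u}+B^im_i^{~p}[\nabla_1\cdots\nabla_m{\bf R}]_{,p}=0$. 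Proposition \ref{prop:mi} says $B^im_i$ already annihilates every b.w.\ $0$ component, so the first term alone must vanish: the b.w.\ $0$ Cartan data (and in particular the curvature and its derivatives of the transverse metric $g_{pq}$, by Lemma \ref{DKframes}) are $u$-independent along the leaves.

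Second, I would upgrade ``the transverse curvature and all its covariant derivatives are $u$-independent'' to ``$g_{pq}$ itself is, in suitable spatial coordinates, $u$-independent.'' Here the key observation is that for a fixed value of $v$ we have at our disposal the freedom of relabelling the transverse coordinates by a $u$-dependent diffeomorphism $x^{'p}=f^p(u;x^q)$, which is exactly the coordinate freedom preserving the Kundt form \eqref{general nontwist} with $\bl=\partial_v$ (it leaves $v$ and $u$ fixed and acts on $g_{pq}$ by the stated tensorial transformation law). So the claim is: an $(N-2)$-parameter family $g_{pq}(u,x)$ of Riemannian metrics, all of whose intrinsic curvature invariants and covariant-derivative data agree for all $u$, is related by a $u$-dependent diffeomorphism to a single fixed metric $g'_{pq}(x)$. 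I would prove this by a moving-frame / Cartan–Karlhede argument on the transverse space: because the spacetime is type ${\bf D}^k$ the frame is (by the running assumption) rigidly fixed, so one obtains a canonically determined orthonormal coframe $\boldm^i=m^i_{~p}\d x^p$ on each leaf whose connection and curvature components are $u$-independent functions of $x$; integrating the resulting $u$-flow of the coframe (a first-order ODE system in $u$ with $x$ as parameters, whose integrability is guaranteed by the structure equations having $u$-independent coefficients) produces the diffeomorphism $f^p(u;x^q)$, and in the new coordinates $g'_{pq,u}=0$.

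I expect the main obstacle to be the second stage: carefully justifying that ``all invariants $u$-independent'' plus ``rigid frame'' forces the metrics to be mutually diffeomorphic via a $u$-smooth family of diffeomorphisms, rather than merely pointwise-for-each-$u$ isometric. The clean way around this is to avoid invariants altogether and argue directly at the level of the connection one-forms: from ${\bf e}_2(\Gamma_{ijk})=0$ (and its higher-order analogues for $m_{i,u}^{~p}$, which also enter $-2\Gamma^1_{~[2i]}$ in \eqref{dtheta1} and $-2\Gamma^i_{~[2j]}$ in \eqref{dmi}) one reads off an explicit linear evolution equation $\partial_u m^i_{~p}=\Lambda^i_{~jp}(u,x)\,m^j_{~q}+\cdots$ for the coframe components; solving this ODE in $u$ and absorbing the solution into a spatial coordinate change is then a routine integration, and the residual gauge (a $u$-independent transverse diffeomorphism) is harmless. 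One should also remark, as the paper does in Proposition \ref{prop:NoB}, that the would-be extra function $B^{(0)i}$ and any $u$-dependent spatial shift that does not appear in a Cartan invariant can be removed by exactly such a coordinate transformation, so the corollary is really the coordinate-level restatement of the invariants being $v$- and $u$-frozen in the type ${\bf D}^k$ frame.
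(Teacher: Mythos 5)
Your proposal reaches the paper's key intermediate conclusion by essentially the paper's own means — Proposition \ref{prop:mi} (the vector field $B^i\bm_i$ annihilates the b.w.\ zero data) combined with ${\bf e}_2(\nabla_1\cdots\nabla_m{\bf R})=0$ forces the boost-weight-zero curvature data, and hence the curvature of the leaf metrics $g_{pq}(u,\cdot)$, to be $u$-independent — but you then finish by a genuinely different route. The paper passes entirely through scalar invariants: Lemma \ref{Dklem} gives $I_{,u}=I_{,v}=0$, Theorem \ref{DKthrm} upgrades this to $u$- and $v$-independence of the transverse curvature components, and the conclusion follows by invoking the fact that a Riemannian metric is characterized uniquely by its $SPIs$, so the leaves are mutually isometric and the $u$-dependence is absorbed into $f^p(u;x^q)$. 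You instead propose to bypass invariants and integrate the explicit linear $u$-evolution of the coframe $m^i_{~p}$ read off from the vanishing of $\Gamma_{2ij}$, $\Gamma_{i2j}$ and their higher-order analogues. Your route is more constructive and supplies the one thing the paper's argument leaves implicit: that the leaf-by-leaf isometries assemble into a single family $f^p(u;x^q)$ depending smoothly on $u$ (being isometric for each fixed $u$ does not by itself produce a smooth flow), at the price of having to track the connection explicitly; the paper's route is shorter but leans on the strong statement that a Riemannian metric is determined by its $SPIs$. Both versions share the same minor soft spots — the b.w.\ zero spacetime components differ from the intrinsic transverse curvature by $W^{(1)}_i$ correction terms, and the vanishing of the relevant connection coefficients presumes the uniquely-defined-frame case — so your argument is at the same level of rigor as the paper's on those points while being tighter on the smoothness-in-$u$ issue.
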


\begin{proof}

For the type ${\bf D}^k$ spacetimes, the vector field $B^i  m_i$ annihilates all of the b.w. 0 components of the curvature tensor and its covariant derivatives, since any $SPI$ is constructed from the b.w. 0 components, lemma \ref{Dklem} implies that any $SPI$ must be independent of the null coordinates $u$ and $v$,
\beq I_{,u} = I_{,v} = 0. \nonumber \eeq
\noindent Theorem \ref{DKthrm} allows us to choose a frame where the components of the transverse curvature tensor:
$$ R_{ijkl} \text{ and } R_{ijkl; i_1\cdots i_n } ,$$
\noindent are all independent of $u$ and $v$, the corresponding $SPI$s of the Riemannian metric, $g_{pq}$, are independent of $u$ and $v$ as well. Since a Riemannian metric is characterized uniquely by its $SPIs$ and different values of $u$ in $g_{pq}(u,x^p)$ give the same set of $SPIs$, this implies that the metric is independent of $u$.

\end{proof}

\begin{prop} \label{prop:Bquad}
If a spacetime is of type ${\bf D}^k$ with a unique null frame $\{ \bl, \tilde{\boldn}, \bm_i \}$, then the null rotation parameters, $B^i$, required to align $\tilde{\boldn}$ must be linear in $v$. 
\end{prop}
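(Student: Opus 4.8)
The plan is to argue by contradiction: assuming $B^i$ is not linear in $v$, I will construct a second type ${\bf D}^k$ null frame distinct from $\{\bl,\tilde{\boldn},\bm_i\}$, contradicting uniqueness. By Proposition \ref{prop:Bquad0} I may write $B^i=G(u,v,x^p)\phi^i(u,x^p)+B^{(1)i}(u,x^p)v$ with $\phi^i\not\equiv0$ a common eigenvector of every curvature matrix; non-linearity means $G_{,vv}\neq0$, and then, exactly as in the proof of Proposition \ref{prop:Bquad0}, differentiating the vanishing b.w.\ $-1$ and $-2$ conditions twice in $\partial_v$ gives the eigenvalue relation $R_{ij}\phi^j=-\tfrac12 R_{12}\phi_i$ for every curvature matrix, where $R_{ij},R_{12}$ are the b.w.\ $0$ components (which coincide in the Kundt and aligned frames). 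I would then take, for a function $\psi(u,v,x^p)$ still to be chosen with $\psi_{,v}\neq0$, the parameter $\hat B^i:=(G+\psi)\phi^i+B^{(1)i}v$ and show that the corresponding null vector $\hat{\boldn}=\tilde{\boldn}+\psi\,\phi_i\,\tilde{\bm}^i+\tfrac12\psi^2|\phi|^2\bl$ (so in particular $\hat{\boldn}\neq\tilde{\boldn}$) is again type ${\bf D}^k$-aligned, violating uniqueness.

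The first step, at the level of the Riemann tensor, is essentially a repeat of the computation in Proposition \ref{prop:Bquad0}: substituting $\hat B^i$ into the null-rotation formulae for $\tilde R_{2i}$ and $\tilde R_{22}$ and invoking the eigenvalue relation, the $\psi$-dependent contributions cancel identically --- the term $R_{12}\psi\phi_i+2R_{ij}\psi\phi^j$ equals $R_{12}\psi\phi_i-R_{12}\psi\phi_i=0$, and the analogous quadratic cancellations hold once one notes $R_{2i}\phi^i=0$ (obtained by contracting the identity $\tilde R_{2i}=0$ for the $B$-frame with $\phi^i$) --- so the transformed negative-b.w.\ components of the Riemann tensor, hence of the Ricci and Weyl tensors, are exactly those produced by $B^i$ itself, namely zero. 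The curvature-vector and Weyl-eigen-bivector conditions used in Proposition \ref{prop:Bquad0} are disposed of by the same substitution.

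The main work --- and, I expect, the principal obstacle --- is to extend this to all covariant derivatives $\nabla^{(m)}{\bf R}$, $m\geq1$, and to the connection-coefficient conditions $\Gamma_{22i}=\Gamma_{2ij}=\Gamma_{i2j}=0$, which together are equivalent to the $\hat B$-frame being type ${\bf D}^k$. Here I would use Proposition \ref{prop:mi}: since $G_{,vv}\neq0$, the transverse vector field $\phi^i\bm_i$ (a multiple of $B^i_{,vv}\bm_i$) annihilates, under frame differentiation, every b.w.\ $0$ component of $\nabla^{(m)}{\bf R}$, and by Corollary \ref{DKcor} those components depend on neither $u$ nor $v$. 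Forming the difference of the type ${\bf D}^k$ equations for $\hat B^i$ and $B^i$ turns them into homogeneous linear equations for $\psi$ and its derivatives whose coefficients are built from these b.w.\ $0$ components, the transverse connection coefficients $\Gamma_{ijk}$, and the metric data $g_{ij}$, $\wn_i$, $\hn$; using the eigenvector property together with the annihilation property of $\phi^i\bm_i$ I would try to reduce them, by induction on $m$ in the spirit of Lemma \ref{lem:KundtPoly}, to the statement that $\phi^i\bm_i$ is --- up to the $v$-rescaling already present in $B^i$ --- a transverse direction along which the entire curvature structure is covariantly constant. Granting such rigidity, a non-constant-in-$v$ solution $\psi$ can be produced, e.g.\ by prescribing $\psi$ on a transversal to the integral curves of $\phi^i\bm_i$ and propagating; the delicate point is disentangling the algebraic (tensorial) part of the further null rotation from the connection-coefficient contributions at arbitrary differential order.

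With such a family of distinct type ${\bf D}^k$ null vectors $\hat{\boldn}$ in hand, the uniqueness hypothesis is contradicted, so in fact $G_{,vv}=0$. Then $B^i=\bigl(G^{(1)}(u,x^p)\phi^i+B^{(1)i}(u,x^p)\bigr)v+G^{(0)}(u,x^p)\phi^i$, and the $v$-independent piece $G^{(0)}\phi^i$ --- which, as at the end of the proof of Proposition \ref{prop:Bquad0}, never appears in any Cartan invariant --- is removed by a coordinate transformation, leaving $B^i$ linear in $v$, as claimed.
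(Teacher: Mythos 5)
Your overall strategy --- derive a contradiction with the uniqueness of the frame by exhibiting a second aligned null direction when $G_{,vv}\neq 0$ --- is the same as the paper's, and your Riemann-level cancellation (using the eigenvalue relation $R_{ij}\phi^j=-\tfrac12 R_{12}\phi_i$ and $R_{2i}\phi^i=0$ to kill the $\psi$-dependent terms in $\tilde R_{2i}$ and $\tilde R_{22}$) is correct as far as it goes. But the proof has a genuine gap exactly where you flag the ``principal obstacle'': the claim that the perturbed frame $\hat{\boldn}$ (with parameter $(G+\psi)\phi^i+B^{(1)i}v$) remains aligned for \emph{all} covariant derivatives $\nabla^{(m)}{\bf R}$ and satisfies the connection-coefficient conditions is never established --- it is replaced by ``Granting such rigidity\ldots''. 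Your rank-two computation only controls curvature \emph{matrices}; the negative-b.w.\ components of the full rank-four tensor (e.g.\ $R_{2ijk}$, $R_{2i2j}$) and of its derivatives transform with terms such as $R_{lijk}B^l$ and $R_{kilj}B^kB^l$ that are not governed by the eigenvalue relation alone, so the asserted cancellation does not follow from what you have written. Since the entire conclusion rests on producing the second aligned frame, the argument as it stands does not close.

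The paper sidesteps this difficulty by running the contradiction in the opposite direction: instead of perturbing $G$ to manufacture a new frame, it first uses the vanishing of $\Gamma_{i[2j]}$, differentiated twice in $v$, to reduce to a single nonzero parameter $B^3$ (via $(\tilde W_3 B^{i'}_{,v})_{,vv}=0$), and then exploits Lemma \ref{lem:KundtPoly} --- each b.w.\ $-p$ component is a degree-$p$ polynomial in $v$ --- to peel off coefficients of powers of $v$ in the null-rotation formulae for $R'_{122i}$, $R'_{2ijk}$, $R'_{2i2j}$. This yields purely algebraic relations among the Kundt-frame components which force $R_{122i}=R_{2i2j}=0$ \emph{already in the unrotated frame}, and the same bookkeeping propagates to all higher derivatives; hence the original $\boldn$ is itself aligned and uniqueness fails. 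If you want to salvage your construction, you would need to carry out the analogous order-by-order verification for $\hat{\boldn}$ at every rank and every derivative order --- which is essentially the computation the paper performs, applied to the special perturbation $\psi=-G$.
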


\begin{proof}


From proposition \eqref{prop:Bquad0}, $B^i_{,vv}$ has the form,
$$B^i_{~,vv} = G_{,vv}(u,v,x^p) \phi^i (u,x^p),$$ 
\noindent where $\phi^i$ is an eigenvector of all curvature matrices, a vector orthogonal to all curvature vectors and orthogonal to the frame basis elements, $\bm^i$. As the components of each $\bm^i$ are independent of $u$ and
the b.w. 0 components are characterized by $SPI$s, which are independent of $u$ and $v$, 
$\boldsymbol{\phi}$ must be independent of $u$ as well. By integrating this expression, coordinates can be chosen so that $B^i$ is of the form: 
\beq B^i = G(u,v,x^p) \phi^i(x^p) + B^{(1)~i}(u,x^p)v.\nonumber \eeq 

As all b.w. 0 components vanish under frame differentiation with respect to ${\bf e}_2$, and since $\phi^i$ may be determined in terms of the b.w. 0 components of some invariant quantity dependent only on the $x^i$ coordinates, we may make a rotation in the spatial plane to set $\bm^3 \propto \boldsymbol{ \phi}$ while preserving $\Gamma_{i2j} = 0$. Doing so, the null rotation parameters take the form: 
$$ B_3 = G(u,v,x^p)+ B_3^{(1)} v, \text{ and } B_{i'} = B_3^{(i')} v,~~i' > 3,$$ 
\noindent where proposition \ref{prop:NoB} has been used to choose a coordinate system such that all non-zero $B^{i}$ are at least linear in $v$.


To continue we will examine the vanishing of the structure constants:  $$ \Gamma_{i[2j]}, \Gamma_{2[2i]}, \text{ and } \Gamma_{2[ij]},$$ 
\noindent relative to a particular coordinate system; this choice of coordinates is motivated by proposition \ref{prop:mi} and the fact that if $B^i_{~,vv}\neq 0$ then  $\boldsymbol{ \phi} \propto G^{-1}_{~,vv} \bm_3$ annihilates the b.w. 0 components of the curvature tensor and its covariant derivatives. In particular, the components of the curvature tensor of the transverse space $\tilde{R}_{ijkl}$ and its covariant derivatives are annihilated by this vector field, and so all $SPI$s constructed for the transverse space will also be annihilated. Choosing rectifying coordinates for the transverse space so that $\bm_3 \propto \boldsymbol{ \phi} =  \partial_{x^3}$, since all $SPI$s are independent of $x^3$, the transverse metric, $g_{pq}$, must be independent of $x^3$ as well. Relative to this coordinate system, the vielbein matrix $m_i^{~p}$ is independent of $x^3$ while the orthogonality condition on the coframe implies that $m^{i'}_{~3} = 0$ for $i' >3$. 

Taking $\Gamma_{i[2j]}$ in \eqref{dmi} and differentiating twice with respect to $v$ gives
\beq 2 m^i_{[p,q]} m_j^{~p} m_3^{~q} B^3_{~,vv} + ({\bf e}_j(B^i))_{,vv} = 0. \nonumber \eeq

\noindent For $i'>3$, it follows that $m^{i'}_{~p,3}m_3^{~3} =0$ and $m^{i'}_{~3} = 0$ due to the independence of $x^3$ and the form of $m_i^{~p}$,  and so the above equation becomes: 
\beq  (\tilde{W}_3 B^{i'}_{,v})_{,vv} = 0. \nonumber \eeq

\noindent This can only occur if $B^{i'}_{,v} =0$ or $\tilde{W}_{3,vv} =  0$. The second condition is equivalent to $B_{3,vv} = 0$ and implies that all rotation parameters are linear in v, at most. If $\tilde{W}_{3,vv} \neq  0$ and  $B^{i'}_{,v} =0$, then it follows from proposition \ref{prop:NoB} that new coordinates can be chosen where $B^{i'} =0$. 

Assuming $\tilde{W}_{3,vv} \neq  0$ and  $B^{i'} = 0$, we will determine additional conditions at the algebraic level by considering the effect of the null rotation about $\bl$ on the curvature tensor: 
\beq \begin{aligned} R_{122i}' &= R_{122i} - R_{1(k|2|i)}B^k - R_{12ki} B^k + B_i R_{1212}, \\
R_{2ijk}' &= R_{2ijk} + 2 R_{2i1[j}B_{k]} -R_{lijk} B^l + R_{12jk} B_i, \\
R_{2i2j}' &= R_{2i2j} + 2R_{122(i} B_{j)} + R_{2(ij)k} B^k + R_{kilj} B^k B^l - 2 R_{1l2(i} B_{j)} B^l \\ & \hspace{5 mm} + |B|^2 R_{1(i|2|j)} - 2 R_{12l(i}  B_{j)}B^l + R_{1212} B_i B_j. \end{aligned} \eeq

\noindent Noting that the b.w. $-p$ terms are polynomials of degree $p$, we may differentiate appropriately with respect to $v$, to derive conditions on the components of the curvature tensor relative to the original Kundt frame. At the algebraic level this gives
\beq & R_{1323} = R_{1212},~~R_{1(3|2|i')} = - R_{123i'}, & \nonumber \\
& R_{3i'j'k'}=0, R_{12j'k'}=0, R_{231k'} =-  R_{123k'} , R_{2i'1k'} = R_{3i'3k'}, & \nonumber \\
& R_{1(i'|2|j')} =0,~R_{132j'} = - R_{1(3|2|j')}. & \nonumber \eeq

\noindent In addition we find that the curvature tensor is of type {\bf D} already, i.e.,  $$R_{122i} =0, R_{2i2j} = 0.$$ 
\noindent 

Due to the $v$-polynomial form of the components, this process can be repeated for higher order covariant derivatives of the curvature tensors. The result of this process is a set of conditions on the b.w. 0 components with the negative b.w. terms all vanishing in the original Kundt frame. This implies that such a solution is already of type ${\bf D}^k$ and any null rotation with $B^3 \neq 0$, $B^{i'} = 0$ yields another null direction for which the curvature tensor and all of its covariant derivatives are of type {\bf D}. This is a contradiction and so $B^i$ must be linear in $v$.
\end{proof}

For the moment, we will focus on the type ${\bf D}^k$ spacetimes where the null vector fields $\bl$ and $\tilde{\bn}$ are unique. Since the spatial vector field ${\bf X} = B^i_{,v} \bm_i$ annihilates all b.w. zero components and hence all $SPIs$ of the transverse space \cite{Hervik2011}, there is a Killing vector field $\boldsymbol{\zeta}$ in the span of {\bf X}, and so  
\beq {\bf X} = X(u,x^p) \boldsymbol{\zeta}. \nonumber \eeq
Choose rectifying coordinates so that $\boldsymbol{\zeta} = \partial_{x^3}$, the transverse metric, $g_{pq}$, is now independent of $x^3$. We may then rotate the frame so that 
\beq \bm_3 \propto \boldsymbol{\zeta}.  \eeq
\noindent Since the parameters of the spatial rotation vanish under frame differentiation of ${\bf e}_2$, the condition that the connection coefficients $\Gamma_{i2j}$ vanish will be preserved. Relative to this frame, the null rotation will only have one non-zero  component, $B^3$ after applying a coordinate transformation to set $B^{(0)~i}(u,x^p)$ equal to zero: $$B^3 = B^{(1)~3} v, B^i =0. $$
\noindent For the remainder of the paper we will work relative to this choice of coframe when a null rotation is needed to align $\bl$ and $\bn$.

\section{Commutator Conditions of Type ${\bf D}^k$ Spacetimes} \label{sec:Commutator}

There are two possible subclasses of type ${\bf D}^k$ spacetimes, depending on whether the pair of aligned null directions $\bl$ and $\bn$ are unique or if there are non-trivial null rotations preserving the type ${\bf D}^k$ form. We will denote each of these broader cases as I and II. 

If multiple aligned null directions exist, then for each pair of aligned null directions there is a boost isotropy, and so locally there exists at least two distinct boosts isometries for which the corresponding infinitesimal generators ${\bf X}^1$ and ${\bf X}^2$ have a non-zero commutator:
\beq [ {\bf X}^1, {\bf X}^2 ] = {\bf Z}, \nonumber \eeq

\noindent where the vector field ${\bf Z}$ will be another Killing vector field. 

Lemma \ref{Dklem} requires that the type ${\bf D}^k$ spacetimes must admit at least three Killing vector fields with 2-dimensional (2D) timelike orbits \cite{CHP2010}, the existence of additional boost isotropies implies that the isotropy group must be a semi-simple Lie group acting faithfully and non-properly on the spacetime and theorem 1.8 in \cite{DMZ} is applicable: 

\begin{thm} 
Let $G$ be a semi-simple group with finite center and no local $SL_2(\mathbb{R})$-factor, acting isometrically, faithfully, and nonproperly on a Lorentz manifold $\tilde{M}$, then:
\begin{enumerate}
\item $G$ has a local factor $G_1$ isomorphic to $O(1,n)$ or $O(2,n)$ with $n\geq 3$;
\item There exists a Lorentz manifold $S$  isometric, up to a finite cover, to $dS_n$ or $AdS_{n+1}$, depending whether $G_1$ is isomorphic to $O(1,n)$ or $O(2,n)$, and an open subset of $\tilde{M}$ in which each $G_1$-orbit is homothetic to $S$; 
\item Any such orbit as above has a $G_1$-invariant neighbourhood isometric to a warped product $L \times_{w} S$, for $L$ a Riemannian manifold. 
\end{enumerate}
\end{thm}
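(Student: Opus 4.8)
This statement is the main structure theorem of Deffaf, Melnick and Zeghib, and a proof follows the Gromov--Zimmer program for rigid geometric structures; here is how I would organize it. The starting point is to extract dynamics from non-properness: by definition there are $g_k\to\infty$ in $G$ and $x_k\to x$ with $g_kx_k$ convergent, so, writing $g_k=k_ka_kk_k'$ in a Cartan ($KAK$) decomposition and passing to subsequences, one produces a recurrent point whose stabiliser $H=G_x$ is non-compact. The isotropy representation $H\to O(T_xM)\cong O(1,N-1)$ is then non-trivial with non-compact image, and since $O(1,N-1)$ has real rank one this forces the ``active'' part of the isotropy to be of rank one.

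Next I would reduce to a single simple factor. Decomposing $G$ locally as a product of simple factors and using a Borel-density / mixing (Howe--Moore) type argument --- non-compact recurrence cannot be carried jointly by commuting factors, it localises to one of them --- one isolates the factor $G_1$ responsible for the non-properness. Combining this with the rank-one isotropy constraint, the list of non-compact simple subgroups of the rank-one group $O(1,N-1)$ modulo the excluded local $SL_2(\mathbb{R})$ factor, and a dimension count for the $G_1$-orbit, pins $G_1$ (up to local isomorphism) down to $O(1,n)$ or $O(2,n)$ with $n\ge 3$; the $O(2,n)$ case is the one where the orbit is Lorentzian of anti-de Sitter type while the isotropy $O(1,n)$ sits as a rank-one subgroup of a rank-two group. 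This gives part (1).

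For part (2) I would analyse the orbit $\mathcal{O}=G_1\cdot x\cong G_1/H$ with the metric induced by $\mathbf{g}$: the induced $G_1$-invariant symmetric bilinear form on $\mathfrak{g}_1/\mathfrak{h}$ is non-degenerate Lorentzian, non-degenerate Riemannian, or has a null radical. The Riemannian case contradicts non-compactness of $G_1$ together with non-compactness of $H$; the degenerate case is ruled out using Gromov's local-freeness/stratification results and the real-analyticity that a Killing-generated structure enjoys on the open set where the $G_1$-action is of maximal type. What remains is a non-degenerate Lorentzian homogeneous space of a semisimple group, and the classification of reductive Lorentzian homogeneous spaces forces $\mathcal{O}$, up to finite cover and a constant rescaling, to be $dS_n=O(1,n)/O(1,n-1)$ or $AdS_{n+1}=O(2,n)/O(1,n)$. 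For part (3), I would then compute the second fundamental form of $\mathcal{O}$: $G_1$-invariance makes $\mathcal{O}$ totally umbilic (and, after the rescaling, totally geodesic in the transverse directions), while $\mathbf{g}$ restricted to the normal bundle $N\mathcal{O}$ is positive-definite because the single timelike direction is already exhausted along $\mathcal{O}$; a de Rham-type splitting then produces a $G_1$-invariant neighbourhood isometric to a warped product $L\times_w S$ with $L$ Riemannian.

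The genuine obstacle, I expect, is the passage in part (2) from an abstract invariant bilinear form on $\mathfrak{g}_1/\mathfrak{h}$ to an honest de Sitter or anti-de Sitter orbit: excluding degenerate orbits and invoking the classification of non-degenerate Lorentzian homogeneous spaces is precisely where the heavy analytic input (Gromov's theory, real-analyticity of Killing structures, the null dynamics of the $AK$-flow) enters. The factor bookkeeping of part (1) and the warped-product extraction of part (3) are, by comparison, routine once (2) is in hand.
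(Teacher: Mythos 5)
This statement is not proved in the paper at all: it is imported verbatim as Theorem~1.8 of Deffaf--Melnick--Zeghib (the reference \cite{DMZ}), and the authors use it as a black box to handle the case of multiple aligned null directions. So there is no in-paper argument to compare yours against; the only fair assessment is of your sketch on its own terms, as a reconstruction of the published proof.

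As such a reconstruction, your outline captures the right architecture (nonproperness $\Rightarrow$ recurrence and non-compact isotropy, isolation of a single simple factor, analysis of the induced form on $\mathfrak{g}_1/\mathfrak{h}$, then a warped-product splitting), but the two steps you yourself flag as "the genuine obstacle" are not actually carried out, and they are where essentially all of the content lives. First, the dismissal of the Riemannian and degenerate orbit types is not an argument: if the orbit were Riemannian the non-compact part of $H$ could a priori act on the \emph{normal} (Lorentzian) directions, so no immediate contradiction arises; and the lightlike-degenerate case is a substantial chapter of the real proof (Zeghib's theory of totally geodesic lightlike foliations and the approximately-stable foliation attached to an unbounded sequence $g_k$), not something disposed of by citing "Gromov stratification." Second, invoking "the classification of reductive Lorentzian homogeneous spaces" to conclude that the orbit is $dS_n$ or $AdS_{n+1}$ comes close to assuming the conclusion; the identification of which pairs $(G_1,H)$ can occur is exactly the algebraic heart of the theorem, and it is also where the exclusion of local $SL_2(\mathbb{R})$-factors and the bound $n\geq 3$ enter, neither of which your sketch actually uses. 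Finally, for part (3) a de Rham-type splitting alone does not produce a \emph{warped} product; one needs the orbits to form a totally umbilic foliation with integrable orthogonal complement and the umbilicity factor constant along orbits, which again requires the $G_1$-invariance argument to be made, not asserted. In short: a plausible roadmap of the external proof, but with the decisive steps left as gaps.
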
 

In this case, the type ${\bf D}^k$ spacetime must be the warped product of a Riemannian metric, ${\bf h}$, with either de Sitter space ($dS_{\tilde{n}}$) or Anti-de Sitter space ($AdS_{\tilde{n}}$)  with a metric of the form:
\beq \begin{aligned} {\bf g'} = e^{\Phi(x^{r'})} g_{ab}(x^c) \d x^a \d x^b + h_{p'q'}(x^{r'}) \d x^{p'} \d x^{q'}, \end{aligned} \label{WrpdMetric} \eeq
\noindent with $a, b,c \in [0, \tilde{n}]$, $p', q',r' \in [\tilde{n}+1, n]$ and $g_{ab}$ denotes the de Sitter or Anti-de Sitter metric \cite{SM2018}:
\beq & \bg =-2\d u\left[\d v-\frac{v^2}2\left(1+\tan^2x\right)du+2v\tan x\d x\right]+\d x^2+\sin^2x~\d \Omega^2_{\tilde{n}-1}, & \label{dSn} \eeq
\small
\beq & \bg=-2\d u\left[\d v-\frac{v^2}2\left(-1+\tanh^2x\right)du-	2v\tanh x\d x\right]+\d x^2+\sinh^2x~\d \Omega^2_{\tilde{n}-1}. & \label{AdSn} \eeq
\normalsize

The metric in \eqref{WrpdMetric} is not quite in Kundt coordinates, but we can make a coordinate transformation $\tilde{v} = e^{\Phi} v$ to put this metric into the form \eqref{general nontwist}:
\small
\beq & {\bf g'} = -2 \d u \left[ \d \tilde{v} + \frac{\tilde{v}^2}{2} e^{-\Phi} H \d u + \tilde{v} w_{p} \d x^p - \tilde{v} \Phi_{,p'} \d x^{p'} \right]   + g_{pq} \d x^p \d x^q + h_{p'q'} \d x^{p'} \d x^{q'},  & \nonumber \eeq
\normalsize
\noindent where the isotropy  group of ${\bf g'}$ determines the functions $H$ and $w_p$, and the metric with indices $p,q \in [3, \tilde{n}]$: 

\begin{itemize}
\item  {\bf Case I.1}:

\beq &H = -(1+\tan^2 x),~~w_2 = 2\tan(x),~~w_{p} = 0,~~ p\in [3,\tilde{n}], & \\ & \bg = \d x^2+\sin^2x\d \Omega^2_{\tilde{n}-1}. & \label{dSnKundt} \eeq 

\item {\bf Case I.2}:

\beq &H = -(-1+\tanh^2 x),~~w_2 = -2\tanh(x),~~w_{p} = 0,~~ p\in [3,\tilde{n}], & \\ & \bg = \d x^2+\sinh^2x\d \Omega^2_{\tilde{n}-1}. & \label{AdSnKundt} \eeq 

\end{itemize}

\noindent As these solutions admit boost isotropy, they are necessarily type ${\bf D}^k$ spacetimes. We note that for these warped product metrics, relative to the Kundt coframe \eqref{nat null coframe} the connection coefficients: $\Gamma_{2[ij]}$, $\Gamma_{i[2j]}$ and $\Gamma_{2[2i]}$ all vanish.

In the case that $[{\bf X}^1, {\bf X}^2] = 0$, the Killing vector fields for any two boost isotropies are proportional, and the desired type {\bf D} null frame is uniquely determined at each point. Imposing the type ${\bf D}^k$ condition on the connection coefficients, $\Gamma_{22i}, \Gamma_{2ij},\text{ and } \Gamma_{i2j}$, we note that the vanishing of the corresponding anti-symmetric terms in \eqref{dtheta1} and \eqref{dmi} give simpler constraints on the metric functions.

\begin{lem} \label{lem:CCoefs}
If a metric is of type ${\bf D}^k$ then, relative to the frame \eqref{NullRotV} the metric functions $H$ and $W_i$ must satisfy: 
\beq & \wn_{[i;j]} = 0, & \\ &  -{\bf e}_i(\hn) +{\bf e}_2(\wn_i) - B^3 {\bf e}_3(\wn_i)= 0, & \label{Gamma22i} \\ &
 2 m^3_{~[p,q]}m_j^{~p} m_3^{~q}  B^3 + {\bf e}_j( B^3)=0. & \label{Gamma2ij} \eeq
\end{lem}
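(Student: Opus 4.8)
The plan is to read off the three stated identities directly from the vanishing of the connection coefficients $\Gamma_{22i}$, $\Gamma_{2ij}$, $\Gamma_{i2j}$ that is forced by the type ${\bf D}^k$ condition, using the expressions for the anti-symmetrized connection coefficients already computed in \eqref{dtheta1} and \eqref{dmi}. Since we are in the case $[{\bf X}^1,{\bf X}^2]=0$ where the aligned null pair $\{\bl,\tilde\bn\}$ is unique, propositions \ref{prop:Bquad0}, \ref{prop:Bquad} and the discussion following them apply: the null rotation needed to align $\tilde\bn$ has only the single non-zero component $B^3 = B^{(1)\,3}v$, linear in $v$, with $B^i=0$ for $i\neq 3$, and $\bm_3$ chosen proportional to the relevant Killing vector $\boldsymbol\zeta=\partial_{x^3}$. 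Substituting $B^i=B^3\delta^i_3$ into the relevant lines of \eqref{dtheta1} and \eqref{dmi} is the whole content of the lemma.

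Concretely, I would proceed in three steps. First, the vanishing of $\Gamma_{1[ij]}$, equivalently $\Gamma^1_{~[ij]}$ in \eqref{dtheta1}, gives $-2\Gamma^1_{~[ij]}=\wn_{[i;j]}$, so the type ${\bf D}^k$ requirement that $\Gamma_{2ij}$ (and the associated $\wn_{[i;j]}$ term) vanish yields $\wn_{[i;j]}=0$; this one is immediate and does not even use the special form of $B^i$. Second, for $\Gamma_{22i}$ I would take the line $-2\Gamma^1_{~[2i]} = {\bf e}_2(\wn_i) - B^j{\bf e}_j(\wn_i) - {\bf e}_i(\hn) - \tilde w_p m_{i,u}^{~p} + B^j\wn_{[i;j]}$ from \eqref{dtheta1}, kill the last term using $\wn_{[i;j]}=0$ just obtained, absorb or discard the $\tilde w_p m_{i,u}^{~p}$ term using corollary \ref{DKcor} (the transverse frame can be taken $u$-independent, so $m_{i,u}^{~p}=0$), and insert $B^j=B^3\delta^j_3$ to get exactly $-{\bf e}_i(\hn) + {\bf e}_2(\wn_i) - B^3{\bf e}_3(\wn_i)=0$, which is \eqref{Gamma22i}. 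Third, for $\Gamma_{i2j}$ I would take $-2\Gamma^i_{~[2j]} = m^i_{~p,u}m_j^p + 2 m^i_{~[p,q]}m_j^{~p} m_k^{~q}B^k + {\bf e}_j(B^i)$ from \eqref{dmi}, again drop $m^i_{~p,u}m_j^p$ by the $u$-independence of the transverse frame, set $i=3$ and $B^k=B^3\delta^k_3$ to obtain $2m^3_{~[p,q]}m_j^{~p}m_3^{~q}B^3 + {\bf e}_j(B^3)=0$, i.e.\ \eqref{Gamma2ij}. (For $i\neq 3$ the equation reduces to ${\bf e}_j(B^i)=0$ plus a term proportional to $B^3$ times a structure coefficient, which is consistent with $B^i=0$ and need not be recorded separately, or can be mentioned as part of the derivation in proposition \ref{prop:Bquad}.)

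The main obstacle, such as it is, is bookkeeping rather than conceptual: one must be careful that the reductions ``$m_{i,u}^{~p}=0$'' and ``$B^i=B^3\delta^i_3$, $B^3=B^{(1)3}v$'' are legitimately available here, which is exactly what corollary \ref{DKcor} and the paragraph following proposition \ref{prop:Bquad} establish — so I would cite those explicitly at the point of use. I would also note that $\Gamma_{i2j}$ vanishing is understood as the vanishing of its anti-symmetric-in-the-last-two-index part $\Gamma^i_{~[2j]}$ (the symmetric part being handled by the Kundt/metricity conditions), matching the $-2\Gamma^i_{~[2j]}$ appearing in \eqref{dmi}; clarifying this identification is the only genuinely delicate point. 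Apart from that, the proof is a direct substitution, and I would present it in roughly the three-step form above with a sentence pointing back to the uniqueness assumption $[{\bf X}^1,{\bf X}^2]=0$ that pins down the frame.
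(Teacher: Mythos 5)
Your proposal is correct and follows essentially the same route as the paper, which obtains the lemma by imposing the vanishing of the anti-symmetrized connection coefficients $\Gamma^1_{~[ij]}$, $\Gamma^1_{~[2i]}$ and $\Gamma^i_{~[2j]}$ from \eqref{dtheta1} and \eqref{dmi} after specializing to $B^i=B^3\delta^i_{~3}$ and a $u$-independent transverse frame. Your version is in fact more explicit than the paper's (which states the reduction in the paragraph preceding the lemma without a separate proof), and your care in citing Corollary \ref{DKcor} for $m_{i~,u}^{~p}=0$ and the post-Proposition \ref{prop:Bquad} frame choice for $B^i$ is exactly the right bookkeeping.
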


Expanding the first condition we find $\wn_{[i;j]} = w_{[p;q]}$ relative to the coordinate basis, and so $\tilde{w}_{p} = \tilde{m}^i_{~p} \wn^i$ must be the spatial gradient of some linear function of $v$: $$\tilde{w}_{p} =  \tilde{m}^i_{~p}   \wn^i = f_{,p}^1 v + f_{,p}^0,$$ with coefficients dependent on $u$ and $x^p$. Without loss of generality, we can apply a coordinate transformation: 
\beq & (v',u', x^{'i}) = (v+h(u,x^p), u, x^i) \to H' = H+h_{,u},~~W_i' = W_i  + h_{,p} m_i^{~p}, & \label{CT1} \eeq

\noindent to set $\tilde{w}_p^{(0)} = f_{,p}^0 = 0$, giving the form:
 \beq \tilde{w}_{p} = f_{,p}  v. \label{wpeqn} \eeq
 
\noindent This coordinate transformation will preserve the form of $B_i = B^{(1)}_i v$ since $B^{(0)}_i$ and $\tilde{W}^{(0)}_i$ do not appear in the set of Cartan invariants and there must be a coordinate system where they are all zero. 

Using the form of $\tilde{w}_p$, the vanishing of \eqref{Gamma22i} leads to three differential equations for the $\hn^{(i)}$ in $\hn$:
\beq & \hn^{(2)}_{,p} =  f_{,p} \hn^{(2)}, & \nonumber \\
&\hn^{(1)}_{,p} = f_{,p,u}, & \label{Heqn} \\ 
& \hn^{(0)}_{,p} = - f_{,p} \hn^{(0)}. & \nonumber \eeq

\noindent The $v$-coefficient $\hn^{(1)}$ will not appear in any of the Cartan invariants once the frame has been aligned implying that a coordinate transformation exists such that $\hn^{(1)}$ can be set to zero while preserving $B^{(0)}_i=0$ and $\tilde{W}^{(0)}_i=0$. Doing so, the second equation in \eqref{Heqn} forces $f$ to be separable, i.e.,
\beq f(u,x^p) = f_0(u) + f_1(x^p). \label{fsep} \eeq

\noindent Noting that $m^i_{~p,3} =0$ and $m_{3}^{~p'} =0$ for $p'>3$, the third condition in the above lemma can be expanded to give a differential equation,
\beq v m_j^{~p}  [ - m^3_{~3,p} m_3^{~3}  B^{(1)~3} + B^{(1)~3}_{,p} - (f_{,p} ) B^{(1)~3}] = 0, \eeq

\noindent which has the following solution:
\beq B^{(1)~3} = m^3_{~3} e^{f} b^3(u).  \label{Bsoln} \eeq

\section{Type {\bf D} Conditions on the Riemann tensor and its derivatives}

Many of the components of the curvature tensor with non-zero b.w. values will vanish automatically. There is only one set of components that will give additional conditions:
\beq \begin{aligned} R_{122i} &= {\bf e}_2(\Gamma_{1i2}) - {\bf e}_i(\Gamma_{122} ), \end{aligned} \eeq

\noindent Setting these components to zero and expanding in terms of $v$ yields two pairs of equations:
\beq \begin{aligned} &   B^{(1)}_{~~~3,u}=0, \\ & B^{(1)~3}  ( m_i^{~p} m_3^{~3} f_{,3,p} - \delta_i^{~3} b e^f f_{,3} ) =0. \end{aligned} \label{R122i} \eeq

\begin{itemize}

\item {\bf Case II.1}: If $B^{(1)~3}$ is zero, these equations are automatically satisfied and so the metric functions are of the form:
\beq H^{(2)} = h_2(u) e^f,~~H^{(1)} = 0,~~H^{(0)} = h_0(u) e^{-f}, w_{p} = f_{1,p}, \label{case2mf}\eeq
\noindent where $f$ is separable with respect to $u$ and $x^p$.

\item {\bf Case II.2}: If $B^{(1)~3}$ does not vanish, then the first equation in \eqref{R122i} gives a condition on the $u$-dependence of $B^{(1)~3}$ which may be expanded to give a differential equation:
\beq B^{(1)}_{~~~3,u} = [b^{3}_{,u} + b^{3} f_{0,u}] m^3_{~3} e^f = 0. \eeq

\noindent Solving for $b^{3}$ we have the following expression for $B^{(1)~3}$: 
\beq B^{(1)~3}  = m^3_{~3} b e^{f_1}, \label{C1B} \eeq

\noindent where $b$ is a constant of integration. 

The functions $\hn$ and $\wn_p$ can be expressed in terms of $f = f_0(u) + f_1(x^{\hat{p}})$: 
\beq \hn^{(2)} = h_2(u) e^{f_1},~~\hn^{(1)} = 0,~~\hn^{(0)} = h_0(u) e^{-f_1},~~\wn_i = \tilde{m}_i^{~p} f_{1,p}. \label{case1mf} \eeq 
\noindent The remaining equations in \eqref{R122i} imposes two differential equations on $f_1$
\beq \begin{aligned}  f_{1,3,3} &= b(m^3_{~3})^2e^{f_1} f_{1,3}, \\  f_{1,3,p'} &= 0,~p'>3. \end{aligned} \label{Case1id} \eeq

\noindent The second condition requires that the function, $f_1$, is separable, i.e., $f_1=f_2(x^3)+f_3(x^{p'})$. The other differential equation for $f_1$ will be examined in the next section.

\end{itemize}

\subsection{Higher Derivatives} \label{sec:Hderiv}
Now that the Riemann tensor is in type {\bf D} form, to ensure that the covariant derivative of the Riemann tensor is of type {\bf D} as well we will consider the frame derivatives with respect to ${\bf e}_1$ and ${\bf e}_2$. Due to the $v$-independence of the Riemann tensor components and the form of the non-zero connection coefficients we only need to consider the frame derivative of ${\bf e}_2$. To see why this is the case consider a tensor with $A,B,C$ denoting groupings of indices whose cumulative b.w. is zero, then the covariant derivative will be of the form:   
\beq T_{A 1 B 2 C;2} = T_{A 1 B 2 C,2} - \Gamma^1_{~21} T_{A 1 B 2 C} - \Gamma^2_{~22} T_{A 1 B 2 C} + \sum_{A,B,C} \Gamma \cdot T,\nonumber \eeq
\noindent where the last term is a symbolic expression denoting the remaining terms involved in the covariant derivative. Since  $\Gamma^1_{~21} = -\Gamma^2_{~22}$, the second and third terms will cancel. Repeating this process for every occurrence of $1$ and $2$ indices in $A,B$ and $C$ leads to the result 
\beq T_{A1B2C;2} = T_{A1B2C,2}. \nonumber \eeq

If the frame derivatives of the components of the curvature tensor with respect to ${\bf e}_2$ vanish, then the higher order covariant derivatives of the curvature tensor will be of type {\bf D}.  We will consider the non-zero b.w. 0 components:
\beq \begin{aligned} R_{1212} &= H_{,v,v} +\frac14 W_{i,v} W^{i,v},  \\
R_{1i2j} &= -\frac12 W_{i;j,v} +\frac14  W_{i,v} W_{j,v}, \\
R_{12ij} &= W_{[i;j],v}. \end{aligned}\eeq

\begin{itemize}
\item {\bf Case II.1:} In the case where $\bl$ and $\bn$ are already aligned, we only need to consider ${\bf e}_2 (R_{1212})$ to determine the form of $H$: 
\beq {\bf e}_2 (R_{1212}) = {\bf e}_2 (H_{,v,v}) = (h_2(u) e^{f}) = 0, \nonumber \eeq

\noindent solving the differential equations for $h_2$, $H^{(2)}$ will be of the form
\beq H^{(2)} = C e^{f_1}, \eeq

\noindent where $C$ is a constant of integration.

\item {\bf Case II.2: } The vanishing of ${\bf e}_2(R_{123i})$ leads to two branches of the following equation: 
\beq {\bf e}_2(R_{123i'}) = \frac12 {\bf e}_2 \left(  b e^{f_1}  m_{i'}^{~p'}  ( m^3_{~3} f_{1,p'} + 2 m^3_{~3,p'})  \right) = 0, \nonumber \eeq
\noindent or more simply 
\beq \frac{1}{2} v b^2 e^{2f_1} \left[ f_{1,x} m_{i'}^{~p'} (f_{1,p'} m^3_{~3} +2 m^3_{~3,p'}) \right] = 0. \label{CaseIIsplit} \eeq


\noindent Assuming that the separable function $f_1=f_2(x^3)+f_3(x^{p'})$ is non-constant: either $f_2=0$ or $f_3$ is of the form:
\beq f_3 = - 2 \ln ( m^3_{~3}) + C' \label{f_3} \eeq
\noindent where $C'$ is a constant of integration.

\begin{itemize}
\item {\bf Case II.2a}: If $f_{1,3} = 0$ then ${\bf e}_2(R_{12ij})= {\bf e}_2(R_{1i2j})=0$ and ${\bf e}_2(R_{1212}) = 0$ gives the same condition on $\hn^{(2)}$ as in Case II.1,
\beq \hn^{(2)} = C e^{f_1}.  \eeq

\item {\bf Case II.2b}: If $f_{1,p'} = 0$  then the differential equation \eqref{Case1id} gives the solution: 
\beq f_1(x^3) = -C' + C_0C_1+C_0 x^3+\ln\left( \frac{C_0}{1- e^{C_0C_1+C_0 x^3}} \right).  \eeq


\noindent Substituting this into ${\bf e}_2 ( R_{1212}) = 0$ gives two equations as coefficients of $v$. The first will remove $u$ dependence from $\hn$,
\beq \hn^{(2)} = Ce^{f_1}, \eeq

\noindent where $C$ is a constant. while the second condition fixes the parameter, $b$: 
\beq b = -\frac{C}{C_0}. \eeq
\noindent Without loss of generality, the constant $C'$ can be set to zero as it does not appear in any metric function.

In this case, the metric functions have the following form:
\beq H = \frac{v^2}{2} \left[ \frac{ C C_0^2 e^{f} - C^2 e^{2f}}{C_0^2 (m^3_{~3})^2} \right],~~w_p = [f_{,p} + 2 \ln (m^3_{~3})_{,p} + \delta_{3p} (m^3_{~3})^2 b e^f ]v,\nonumber \eeq

\noindent with the function, $f$, \beq f(x^3) = C_0(C_1+ x^3)+\ln \left( \frac{C_0}{1- e^{C_0(C_1+ x^3)}} \right),\nonumber \eeq 
\noindent and the function $m^3_{~3}$ along with the components of the  transverse metric $g_{pq}$ are arbitrary functions of $x^{p'}$. Writing $\Phi  = 2 \ln( m^3_{~3})$, we can use the coordinate transformation $\tilde{v} = - e^{\Phi} v$ to rewrite the metric as a warped product of a Riemannian metric, $g_{p'q'}$ with a three-dimensional (3D) spacetime with metric: 
\small
\beq \begin{aligned} d\tilde{s}^2 - \d u \left( \d v + \frac{CC_0^3 e^{C_0(C_1 + x)}v^2}{(Ce^{C_0(C_1+x)}+C_0)^2} \d u + \frac{vC_0 (Ce^{C_0(C_1+x)}-C_0)}{Ce^{C_0(C_1+x)}+C_0} \d x \right) + \d x^2. \end{aligned} \nonumber \eeq
\normalsize
\noindent A simple calculation shows that the 3D spacetime is in fact $dS_3$. Thus, this type ${\bf D}^k$ metric belongs to Case I.1.

\end{itemize}   

\end{itemize}

As $\Gamma_{122}$ will not appear in any covariant derivative and the frame derivatives of ${\bf e}_1$ and ${\bf e}_2$ annihilate all components of the curvature tensor and the connection coefficients, we conclude that any higher covariant derivative of the Riemann tensor will be of type {\bf D}. In analogy with theorem 7.1 in \cite{CHPP2009}, where a Kundt spacetime will be degenerate Kundt if the curvature tensor and its first covariant derivative are of alignment type {\bf II}, we have shown the following result: 

\begin{cor} \label{cor:D1toDk}
If the Riemann tensor and its first covariant derivative are of type {\bf D}, then the spacetime is of type ${\bf D}^k$.
\end{cor}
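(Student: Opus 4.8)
The plan is to leverage the structural results already established in Sections 3–5, which were derived using only the type {\bf D} condition on the Riemann tensor together with the first-order vanishing conditions on $R_{abcd;e}$ (i.e., the vanishing of negative-b.w.\ components of $\nabla {\bf R}$). First I would observe that the hypothesis ``$R_{abcd}$ and $R_{abcd;e}$ are of type {\bf D}'' is exactly enough to invoke Theorem~\ref{DKthrm} and the Kundt structure (Theorem~1.1): indeed, a type-{\bf D} Riemann tensor with type-{\bf D} first derivative is in particular type {\bf II} to first order, so by theorem 7.1 of \cite{CHPP2009} the spacetime is degenerate Kundt, and we may put the metric in the form \eqref{general nontwist} with $g_{pq}$ independent of $v$, adopt the null frame \eqref{nat null frame}, and perform the null rotation \eqref{NullRot} aligning $\tilde{\boldn}$.

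Next I would retrace which conclusions in Sections 3–5 used \emph{only} first-order information. The constraints \eqref{WBini} ($W_i$ linear in $v$) and \eqref{Hini} ($H$ quadratic in $v$) come from vanishing of $R_{121i}$, $R_{1i12}$ and the Lie derivative of the Ricci scalar, hence are zeroth/first order. Propositions~\ref{prop:NoB}, \ref{prop:Bquad0}, \ref{prop:mi}, \ref{prop:Bquad} and Corollary~\ref{DKcor} all rest on Lemmas~\ref{lem:KundtPoly}–\ref{lem:Kundt:Operator} (polynomial-in-$v$ degree bounds, valid for \emph{any} degenerate Kundt spacetime once $H$ is quadratic and $W_i$ linear) together with the type-{\bf D} condition on curvature matrices and vectors — but every curvature matrix/vector used there can be built from ${\bf R}$ and $\nabla{\bf R}$ alone, so first-order type {\bf D} suffices to force $B^i$ linear in $v$ and to pass to the aligned frame \eqref{NullRotV}. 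Then Lemma~\ref{lem:CCoefs} and equations \eqref{Gamma22i}–\eqref{Gamma2ij} follow from the vanishing connection coefficients, which by theorem 7.2 of \cite{CHPP2009} is forced already by type {\bf D} of ${\bf R}$ and $\nabla {\bf R}$. Thus one arrives at the same metric-function forms \eqref{wpeqn}, \eqref{Heqn}, \eqref{fsep}, \eqref{Bsoln} and the Case II.1/II.2 dichotomy \eqref{case2mf}, \eqref{C1B}, \eqref{case1mf}, \eqref{Case1id}, using nothing beyond $R_{abcd}$ and $R_{abcd;e}$ being of type {\bf D}.

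Finally I would invoke the argument of Section~\ref{sec:Hderiv}: once the Riemann tensor is in type {\bf D} form and $\nabla{\bf R}$ is of type {\bf D} (equivalently, the frame derivative ${\bf e}_2$ annihilates all b.w.-zero curvature components, which is what the analysis of ${\bf e}_2(R_{1212})$, ${\bf e}_2(R_{123i})$ etc.\ enforces), the $v$-independence of the curvature components and the structure of the surviving connection coefficients — in particular the cancellation $\Gamma^1_{~21} = -\Gamma^2_{~22}$ and the irrelevance of $\Gamma_{122}$ — imply that every higher covariant derivative is again obtained by applying ${\bf e}_1$ and ${\bf e}_2$ to objects already annihilated by them, hence is again of type {\bf D}. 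Therefore all $\nabla^{(k)}{\bf R}$ are of type {\bf D}, i.e.\ the spacetime is type ${\bf D}^k$. The one point requiring care — and the main obstacle I anticipate — is verifying that the first-order hypothesis genuinely delivers the vanishing of \emph{all} the relevant connection coefficients and the full Case~II.2 constraints \eqref{Case1id}, since some of those were obtained in the text by differentiating $v$-polynomial curvature components whose degree bounds (Lemma~\ref{lem:KundtPoly}) a priori involve $\nabla^{(m)}{\bf R}$ for $m>1$; one must check that the specific identities used to pin down $f$, $b$ and the Case-distinction can each be extracted from $R_{abcd;e}$ alone, or else argue that the extra conditions are automatically consequences once $H$, $W_i$, $B^i$ have the forms already forced at first order.
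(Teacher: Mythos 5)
Your proposal is correct and follows essentially the same route as the paper: your closing argument --- that once $H$, $W_i$ and $B^i$ are pinned down by the zeroth- and first-order type {\bf D} conditions, every higher covariant derivative reduces to ${\bf e}_1$, ${\bf e}_2$ frame derivatives of $v$-independent b.w.-zero quantities (via $\Gamma^1_{~21}=-\Gamma^2_{~22}$ and the fact that $\Gamma_{122}$ never enters a covariant derivative), all of which are annihilated --- is precisely how the paper concludes Section~\ref{sec:Hderiv}. The caveat you flag about whether the intermediate propositions truly require only first-order data is a fair one, but the paper's own proof is no more explicit on that point, so it does not constitute a gap relative to the published argument.
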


\section{Conclusions}

We have determined all spacetimes admitting a (local) boost isotropy. This condition implies that the curvature tensor and its covariant derivatives must be invariant under a boost in at least one plane spanned by two null directions $\bl$ and $\bn$. Adapting a frame basis to this pair of null vector fields, the isotropy condition forces the curvature tensor and its covariant derivatives to be of type {\bf D} relative to this basis. Using the result that all type ${\bf D}^k$ spacetimes must belong to the degenerate Kundt class, we have shown that these metrics take a specific form depending on their isotropy group, and this can be summarized in the following proposition:

\begin{prop}

If the isotropy group of a spacetime contains a subgroup isomorphic to $O(1,\tilde{n})$, or $O(2,\tilde{n})$ with $\tilde{n}\geq 3$ acting isometrically, faithfully and nonproperly on the spacetime, then the metric is isometric (up to a finite cover) to a warped product of a Riemannian manifold ${\bf h}$ with $dS_{\tilde{n}}$ or $AdS_{\tilde{n}+1}$, respectively. 

\begin{itemize}
\item {\bf Case I.1}: For $dS_{\tilde{n}}$, $\tilde{n} \geq 3$, the  metric functions for the metric \eqref{dSn} are:
\beq \begin{aligned} & H^{(2)} = -e^{-\Phi} (1 + \tan^2 x),~ w_3^{(1)} = 2 \tan x, w_{p'}{(1)}  = \Phi_{,p'}, \\ & {\bf \tilde{g}} = e^{\Phi}( dx^2 + \sin^2 x d \Omega^2_{\tilde{n}-1}) + h_{p' q'} (x^{r'}) dx^{p'} dx^{q'}. \end{aligned} \label{Case0a} \eeq

\item {\bf Case I.2}: For $AdS_{\tilde{n}+1}$, $\tilde{n} \geq 3$, the metric \eqref{AdSn} has metric functions:
\beq \begin{aligned}  & H^{(2)} = -e^{-\Phi} (-1 + \tanh^2 x),~ w_3^{(1)} = 2 \tanh x, w_{p'}{(1)} = \Phi_{,p'}, \\
& {\bf \tilde{g}} =  e^{\Phi}( dx^2 + \sinh^2 x d \Omega^2_{\tilde{n}}) + h_{p' q'} (x^{r'}) dx^{p'} dx^{q'}. \end{aligned} \label{Case0b} \eeq
\end{itemize}

\noindent If the isotropy group contains a subgroup isomorphic to $O(1,1)$ acting nonproperly on the spacetime then the type ${\bf D}^k$ spacetimes consist of two different metric classes:

\begin{itemize} 
\item {\bf Case II.1}: The metric is isomorphic to a warped product of a Riemannian manifold ${\bf h}$ with $dS_1$ or $AdS_{2}$ respectively, i.e., the metric functions are of the form: \beq H = \frac{v^2}{2} Ce^{f} ,~~w_p = f_{,p}v,\nonumber \eeq

\noindent where the function $f=f(x^{p})$ and the transverse metric is arbitrary.

\item {\bf Case II.2a}: The Kundt metric has the following metric functions: \beq H = \frac{v^2}{2} [C e^{f} - g_{33} b^2 e^{2f}],~~w_p = [f_{,p} + \delta_{3p} (m^3_{~3})^2 b e^f ]v,\nonumber \eeq

\noindent where the function $f=f(x^{\hat{p}})$, $\hat{p} > 3$, and the components of the transverse metric $g_{pq}(x^{\hat{r}})$ are independent of $x^3$ but otherwise arbitrary.

\end{itemize}
\end{prop}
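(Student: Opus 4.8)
The plan is to assemble this classification from the structural results of Sections 2--5, organized around the dichotomy of Section~\ref{sec:Commutator}: whether the aligned null pair $\{\bl,\tilde{\bn}\}$ is locally unique --- so that the isotropy is exactly a one-parameter boost group --- or whether additional boost isometries are present. First I would use the theorem of \cite{CHPP2009} quoted at the outset to place the metric in the degenerate Kundt form \eqref{general nontwist} with $g_{pq}=g_{pq}(u,x)$, and then record the $v$-polynomial normal forms \eqref{WBini} and \eqref{Hini} for $W_i$ and $H$ that follow from annihilating the positive-b.w.\ curvature components and from Lemma~\ref{Dklem}. With these normal forms fixed, the argument divides cleanly according to the two alternatives above.

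For the non-unique case I would follow Section~\ref{sec:Commutator}: two distinct boost isotropies furnish infinitesimal generators ${\bf X}^1,{\bf X}^2$ whose commutator is a further Killing vector, so the isotropy acts isometrically, faithfully and nonproperly and is semisimple; once the local $SL_2(\mathbb R)$-factor alternative is set aside, Theorem~1.8 of \cite{DMZ} forces a warped-product splitting \eqref{WrpdMetric} with fibre $dS_{\tilde n}$ or $AdS_{\tilde n+1}$, $\tilde n\ge 3$. Passing to Kundt coordinates via $\tilde v=e^{\Phi}v$ converts \eqref{WrpdMetric} into the form \eqref{general nontwist} and reads off the metric functions \eqref{Case0a} and \eqref{Case0b}; since these spacetimes manifestly carry a boost isotropy they are type $\mathbf{D}^k$, which establishes Cases~I.1 and~I.2.

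For the unique case I would chain together the propositions of Sections~3--5. Propositions~\ref{prop:NoB}, \ref{prop:Bquad0} and~\ref{prop:Bquad} reduce the null rotation aligning $\tilde{\bn}$ to a single parameter $B^3=B^{(1)3}v$ linear in $v$, all other $B^i$ being gauged away; Lemma~\ref{lem:CCoefs} together with the coordinate normalizations \eqref{CT1} and \eqref{wpeqn} then forces $\tilde w_p=f_{,p}v$ with $f$ separable \eqref{fsep}, pins down $\hn$ via \eqref{Heqn}, and pins down $B^{(1)3}$ via \eqref{Bsoln}. Imposing type $\mathbf{D}$ on the Riemann tensor leaves only $R_{122i}=0$, whose $v$-expansion \eqref{R122i} separates Case~II.1 ($B^{(1)3}=0$) from Case~II.2 ($B^{(1)3}\ne 0$); imposing type $\mathbf{D}$ on $\nabla(\text{Riem})$ in Section~\ref{sec:Hderiv} --- which by Corollary~\ref{cor:D1toDk} already guarantees type $\mathbf{D}$ to all orders --- yields $H^{(2)}=Ce^{f_1}$ in Case~II.1 and, in Case~II.2, the sub-split into Case~II.2a ($f_{1,3}=0$) and Case~II.2b ($f_{1,p'}=0$). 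I would then note that Case~II.2b is isometric, after the substitution $\tilde v=-e^{\Phi}v$, to a warped product over $dS_3$ and so is already subsumed in Case~I.1, leaving only Cases~II.1 and~II.2a; each of these is checked to be genuinely type $\mathbf{D}^k$ by verifying that $\Gamma_{22i},\Gamma_{2ij},\Gamma_{i2j}$ vanish in the frame \eqref{NullRotV} and that the curvature tensor and all of its covariant derivatives are type $\mathbf{D}$ there.

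The real technical difficulty lies not in this assembly but in the ingredients it invokes --- above all Proposition~\ref{prop:Bquad}, where a null rotation parameter that is quadratic or higher in $v$ has to be excluded. What remains at the level of this proposition is bookkeeping: confirming that the successive dichotomies (unique versus non-unique null pair, then $B^{(1)3}=0$ versus $\ne 0$, then $f_{1,3}=0$ versus $f_{1,p'}=0$) are exhaustive, that collapsing Case~II.2b into Case~I.1 leaves no residual family, and that the $SL_2(\mathbb R)/O(1,2)$ borderline excluded from the hypothesis of Theorem~1.8 of \cite{DMZ} is indeed recovered by the $O(1,1)$ branch rather than escaping the analysis.
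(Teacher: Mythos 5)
Your proposal is correct and follows essentially the same route as the paper: this proposition is a summary statement whose proof is the cumulative content of Sections 2--5, and you assemble it exactly as the paper does --- the unique/non-unique dichotomy of Section~\ref{sec:Commutator}, the theorem of \cite{DMZ} yielding the warped products of Cases I.1 and I.2, the chain of Propositions \ref{prop:NoB}, \ref{prop:Bquad0}, \ref{prop:Bquad} and Lemma \ref{lem:CCoefs} reducing the null rotation to $B^3=B^{(1)3}v$, the split on $B^{(1)3}$ into Cases II.1 and II.2, and the absorption of Case II.2b into Case I.1. The caveats you flag (exhaustiveness of the dichotomies and the $SL_2(\mathbb{R})$ exclusion in the hypothesis of the \cite{DMZ} theorem) are fair, but they concern the ingredients rather than the assembly, which matches the paper's own argument.
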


\noindent Case II.2b, introduced in section \ref{sec:Hderiv} is in fact a warped product of a Riemannian manifold with $dS_3$ and hence belongs to case I.1. 

We note that the type ${\bf D}^k$ spacetimes are contained in the class of degenerate Kundt spacetimes, and hence are $\mathcal{I}$-degenerate in this larger class. However, when restricted to the subclass of type ${\bf D}^k$ spacetimes, these spacetimes are $\mathcal{I}$-non-degenerate since these spacetimes are in fact locally characterized by their $SPIs$ \cite{CHPP2009}. Any restricted deformation of a given type ${\bf D}^k$ metric which preserves the ${\bf D}^k$ condition produces a new type ${\bf D}^k$ metric with a different set of $SPIs$.

In future work we will determine the existence of type ${\bf D}^k$ Einstein metrics in arbitrary dimension. Due to the difficulty in determining type {\bf D} Einstein spacetimes in higher dimensions \cite{pravdab, wylleman2015finalizing}, the type ${\bf D}^k$ metrics present a restricted subclass of spacetimes which may provide interesting Einstein metrics. However, there are some examples  of locally homogeneous Einstein type ${\bf D}^k$ solutions relative to the Kundt coordinates \cite{CFH2008}, which are reviewed in Appendix A. The conditions for a general type ${\bf D}^k$ metric to be Einstein are not readily solved in the current coordinate system. By choosing a new coordinate system adapted to the two geodesic shear-free, twist-free and expansion-free  null directions, we hope to determine the set of Einstein type ${\bf D}^k$ spacetimes. In Appendix B, we have summarized all 4D type ${\bf D}^k$ spacetimes, and since these metrics have a potential physical interpretation, we display the Ricci tensor for each subcase as well. This is meant to illustrate the difficulties related to this coordinate system, for example, when determining the existence of Einstein metrics.

\section*{Acknowledgements}

This work was supported through the Research Council of Norway, Toppforsk grant no. 250367: Pseudo-
Riemannian Geometry and Polynomial Curvature Invariants: Classification, Characterisation and Applications (S.H. L.W. and D.M.), and NSERC (A.C.).

\section*{Appendix A: Five Dimensional Einstein $CSI$ Type ${\bf D}^k$ Metrics}


Within the class of degenerate Kundt spacetimes, there a subclass of spacetimes for which all $SPIs$ are constant or zero, known as $CSI$ or $VSI$ spacetimes, respectively. The $CSI$ and $VSI$ spacetimes are of particular relevance in the context of  supergravity since they contain solutions to supergravity which also admit supersymmetries \cite{CFH2008}. A general approach to generate locally homogeneous Einstein Kundt-$CSI$ solutions was introduced in \cite{CFH2008}. This approach relies on a particular choice of the locally homogeneous transverse metric, ${\bf g}^\perp$, and was illustrated in 5D. Each of the 5D $CSI$ Einstein solutions admit a $CSI$ type ${\bf D}^k$ solution as a  special case. We will review the three 5D solutions that have been constructed and show that they belong to Case II.2a. 

\begin{itemize}
\item {\it Heisenberg group:} 

The Heisenberg group can be equipped with the left-invariant metric:
\beq g^\perp_{~pq} \d x^p \d x^q = \left( \d x + \frac{b}{2}( y \d z - z \d y)\right)^2 + \d y^2 + \d z^2, \eeq
\noindent and the non-zero metric function $H$ and one-form ${\bf w}  = w_p \d x^p$ are then
\beq H = \frac{b^2}{4}v^2+ H^{(0)}(u,x,y,z),~~{\bf w} = \sqrt{2} b \left(\d x + \frac{b}{2} ( y \d z - z \d y)\right). \label{Heisenberg}\eeq

\noindent This Kundt metric will be Einstein if $H^{(0)}$ satisfies, \beq \Box^\perp H^{(0)} + \left( H^{(0)}W_i \right)^{;i} = 0, \label{EinsteinC} \eeq
\noindent while the metric will be of type ${\bf D}^k$ if $H^{(0)} = 0$, and so the type ${\bf D}^k$ solution is automatically Einstein. To show it is of type ${\bf D}^k$, we complete the Kundt coframe \eqref{nat null coframe} with: 
\beq \bm^3 = \d x + \frac{b}{2}( y \d z - z \d y),~ \bm^4 = \d y, ~ \bm^5 = \d z, \eeq 
\noindent and apply a null rotation \eqref{NullRotV} with $B^3 = \sqrt{2} b$ and $B^{i'} = 0, i'>3$. The Ricci tensor is of the form:
\beq {\bf R} = -\frac{b^2}{2} ( - 2 \bl \bn + \delta_{ij} \bm^i \bm^j ). \eeq

\item {\it SL(2, $\mathbb{R}$):} 

Forming a Kundt metric from the left-invariant metric
\beq  g^\perp_{~pq} \d x^p \d x^q = \left( \d x - a \frac{\d z}{y}\right)^2 + \frac{b^2}{y^2}(\d y^2 + \d z^2), \eeq
\noindent with the non-zero metric function $H$ and one-form ${\bf w}  = w_p \d x^p$ are: 
\beq H = \frac{a^2}{4b^4}v^2 + H^{(0)}(u,x,y,z),~~ {\bf w} = \frac{\sqrt{2(a^2+b^2)}}{b^2} \left( \d x - a \frac{\d z}{y}\right) \label{sl2r}. \eeq
\noindent If $H^{(0)}$ satisfies \eqref{EinsteinC} for the current one-form, {\bf w} in \eqref{sl2r}, then the resulting Kundt metric is Einstein, while if $H^{(0)}$ vanishes then the metric is of type ${\bf D}^k$ as well. Using the Kundt coframe \eqref{nat null coframe} with: 
\beq \bm^3 = \left( \d x - a \frac{\d z}{y}\right), \bm^4 =\frac{b}{y} \d y,~~ \bm^5 = \frac{b}{y} \d z \eeq
\noindent a null rotation \eqref{NullRotV} with $B^3 = \frac{\sqrt{2(a^2+b^2)}}{b^2}$ and $B^{i'} = 0, i'>3$ puts the curvature tensor and its covariant derivatives into type {\bf D} form and the Ricci tensor is of the form:
\beq {\bf R} = \frac{-(a^2+2b^2)}{2b^4} ( - 2 \bl \bn + \delta_{ij} \bm^i \bm^j ). \eeq

\item {\it The 3-sphere $S^3$:} 

Taking the Berger metric as the transverse space metric,
\beq g^\perp_{~pq} \d x^p \d x^q = a^2 (\d x + \sin y \d z)^2 + b^2 (\d y^2 + \cos^2 y \d z^2), \eeq
 \noindent with the non-zero metric function $H$ and one-form ${\bf w}  = w_p \d x^p$: 
\beq H = \frac{a^2}{4b^4}v^2 + H^{(0)}(u,x,y,z),~ {\bf w} = \frac{a \sqrt{2(a^2-b^2)}}{b^2} (\d x + \sin y \d z). \label{3sphere} \eeq
\noindent As in the previous two examples, if $H^{(0)}$ satisfies \eqref{EinsteinC} for the current one-form, {\bf w} in \eqref{3sphere}, then the metric is Einstein, and if $H^{(0)}$ vanishes then the metric is of type ${\bf D}^k$. Using the Kundt coframe \eqref{nat null coframe} with: 
\beq \bm^3 = (\d x + \sin y \d z), \bm^4 = b \d y,~~ \bm^5 = b \cos y \d z, \eeq
\noindent a null rotation \eqref{NullRotV} with $B^3 =\frac{a \sqrt{2(a^2-b^2)}}{b^2}$ and $B^{i'} = 0, i'>3$ shows that the curvature tensor and all of its covariant derivatives are of type {\bf D} form, and the Ricci tensor is:
\beq {\bf R} = \frac{-(a^2-2b^2)}{2b^4} ( - 2 \bl \bn + \delta_{ij} \bm^i \bm^j ). \eeq

\end{itemize}

\section*{Appendix B: The Ricci Tensor of the 4D ${\bf D}^k$ Metrics }

In 4D, the boost isotropic metrics have been entirely determined, albeit in a different coordinate system  \cite{cahen1968}. Using the isometry groups as a coarse classification we can determine relationships between the 4D metrics we have found in this paper with those in \cite{cahen1968}. The metrics in Case II.1 correspond to the metrics in equation 2.9 of \cite{cahen1968}:
\beq \d s^2 = (1+\lambda x)^2 \frac{\d u \d v}{\left(1+\frac{K}{4}uv\right)^2} -\alpha^2(x,t) \d t^2 - \beta^2(x,t) \d x^2. \label{Cahen1} \eeq 
\noindent Case II.2a corresponds to the metrics in equation 2.11 in \cite{cahen1968}: 
\beq \d s^2 = q^2(x) \frac{\d u \d v}{\left(1+\frac{K}{4}uv\right)^2} - \frac{f(x)^2}{4}\left[ \d t - \frac{C_1 v\d u}{1-\frac{K}{4}uv}+ \frac{C_1 u \d v}{1 - \frac{K}{4}uv }\right]^2 - \frac{\d x^2}{f^2(x)} \label{Cahen2} \eeq
\noindent In this coordinate system, $(u,v,t,x)$, $u$ and $v$ are null coordinates while $t$ and $x$ are spatial coordinates. The type ${\bf D}^k$ metrics in Case I.1 and I.2 arising from warped products of $dS_{n}$, $2 \leq n \leq 4$ or $AdS_{n'+1}$, $1 \leq n' \leq 3$ with a 2D Riemannian metric will belong to subcases of the metrics in \eqref{Cahen1} and \eqref{Cahen2}.

To continue we will, consider the 4D type ${\bf D}^k$ metrics, and for each case we will compute their Ricci tensor components using the formulas:
\beq \begin{aligned} R_{12} &= H^{(2)} - \frac12 W^{(1)~i}_{~~~;i} + \frac12 W^{(1)~i} W_i^{(1)}, \\
 R_{ij} &= - W^{(1)}_{(i;j)} + \frac12 W^{(1)}_i W^{(1)}_j + \tilde{R}_{ij}, \end{aligned} \eeq
\noindent where $\tilde{R}_{ij}$ denotes the Ricci tensor of the transverse space. For the ${\bf D}^k$ metrics, the Ricci tensor can have Segre type: $[11,(1,1)], [(11),(1,1)], [1,(11,1)]$ and $[(111,1)]$. The Ricci tensor has a physical interpretation as a non-null Maxwell field for Segre type $[(11),(1,1)]$ and either a vacuum or a cosmological constant for Segre type $[(111,1)]$ In order for a metric to be Einstein, the Ricci tensor must be of the form $R_{ab} = \lambda g_{ab}$ where $\lambda$ is a constant, and so
\beq R_{12} = -\lambda,~~ R_{ij} = \delta_{ij} \lambda. \label{EinsteinM} \eeq

\noindent We note that in GR, the only solutions admitting a Lie group of six Killing vector fields, $G_6$, with 3D timelike orbits, $T_3$, will be vacuum or $\Lambda$ solutions. Thus these are the only spaces of constant curvature \cite{exact}. Nonetheless we will provide the Ricci tensors for the warped products of $dS_3$ or $AdS_3$ with $\mathbb{R}$. 

\begin{itemize}

\item {\bf Case I.1}: 

Using the coordinates $(u,v,x,y)$, the line-element is: 
\small
\beq \begin{aligned} \d s^2 &=-2\d u\left[\d v-\frac{v^2}2 e^{-\Phi} \left(1+\tan^2x\right)\d u+ 2 v\tan x \d x - v \Phi_{,y} \d y] \right]+\d x^2+ m^4_{~4}(y) \d y^2, \end{aligned} \eeq
\normalsize

\noindent where $\Phi$ and $m_4^{~4}$ are arbitrary functions of $y$. Then relative to the type ${\bf D^k}$ null frame, $\{ \bn, \bl, \bm^3, \bm^4\}$ with $i, j \in [3,4]$ the Ricci tensor components are:
\beq \begin{aligned}  R_{12} &= \frac12 m_4^{~4}( m_4^{~4} \Phi_{,y})_{,y} + \frac12 (m_4^{~4} \Phi_{,y})^2   - 2 - (1+\tan^2 x) e^{-\Phi} + (1+\tan^2 x), \\
R_{33} &= 2, \\
R_{34} &= m_4^{~4} \Phi_{,y} \tan x, \\
R_{44} &= - m_4^{~4} ( m_4^{~4} \Phi_{,y})_{,y} - \frac12 (m_4^{~4})^2 \Phi_{,y}^2 .    \end{aligned} \eeq

\noindent To be an Einstein manifold, $R_{34}$ must be zero, so that the warping function is constant. This forces $R_{44}$ to vanish, so that the Ricci tensor cannot satisfy the Einstein condition \eqref{EinsteinC}. 

\item {\bf Case I.2}: 

Relative to the coordinates $(u,v,x,y)$, the line element is:
\small
\beq \begin{aligned} \d s^2 & =-2\d u\left[\d v-\frac{v^2}2 e^{-\Phi}\left(-1+\tanh^2x\right)\d u-	2v\tanh x\d x - v \Phi_{,y} dy] \right]+\d x^2+ m^4_{~4}(y) \d y^2, \end{aligned} \eeq
\normalsize

\noindent where $\Phi$ and $m_4^{~4}$ are arbitrary functions of $y$. Then with respect to the type ${\bf D^k}$ null frame, $\{ \bn, \bl, \bm^3, \bm^4\}$ with $i, j \in [3,4]$ the Ricci tensor components are:
\beq \begin{aligned}  R_{12} &= \frac12 m_4^{~4}( m_4^{~4} \Phi_{,y})_{,y} + \frac12 (m_4^{~4} \Phi_{,y})^2    2 - (-1+\tanh^2 x) e^{-\Phi} + (-1+\tanh^2 x), \\
R_{33} &= -2, \\
R_{34} &= -m_4^{~4} \Phi_{,y} \tanh x, \\
R_{44} &= - m_4^{~4} ( m_4^{~4} \Phi_{,y})_{,y} - \frac12 (m_4^{~4})^2 \Phi_{,y}^2 .    \end{aligned} \eeq

\noindent To be an Einstein manifold, $R_{34}$ must be zero, and so the warping function must constant. With $\Phi$ constant, $R_{44}$ vanishes and this spacetime cannot be Einstein.
 
\item {\bf Case II.1 }: 

In the coordinate system, $(u,v,x^3,x^4)$, the line element is:
\beq \d s^2 = - \d u \left( \d v - C e^{\Phi} \frac{v^2}{2} \d u - \Phi_{,x^3} \d x^3 - \Phi_{,x^4} dx^4 \right) + g_{pq}(x^3,x^4) \d x^{p} \d y^{q},  \eeq
\noindent where $C = -1$ or $1$ for the 2D de Sitter or anti-de Sitter space respectively, the indices $p, q \in [3,4]$ and $\Phi$ is an arbitrary function of $x^p$.  Then relative to the type ${\bf D^k}$ null frame, $\{ \bn, \bl, \bm^3, \bm^4\}$ with $i, j \in [3,4]$ the Ricci tensor components are:
\beq \begin{aligned}  R_{12} &= - C e^{-\Phi} + \frac12 [ m_{i}^{~p} \Phi_{,p} m^{i~q} \Phi_{,q} - m_{i}^{~q} (m_{i}^{~p} \Phi_{,p})_{,q} + \Gamma^{k}_{~ij} \delta^{ij} m_{k}^{p} \Phi_{,p}], \\
 R_{ij} &= m_{(i}^{~q} (m_{j)}^{~p} \Phi_{,p})_{,q} - \Gamma^{k}_{~(ij)} m_{k}^{p} \Phi_{,p} + \frac12  m_{i}^{~p} m_{i}^{~q} \Phi_{,q} \Phi_{,p} + \tilde{R} \delta_{ij} .    \end{aligned} \eeq

\noindent where $\tilde{R}$ is proportional to the Gaussian curvature of the 2D transverse space. Relative to this coordinate system, the differential equations that must be satisfied to ensure the Einstein condition \eqref{EinsteinC} are very complicated.

\item {\bf Case II.2a}:

The line element relative to the coordinates $(u,v,x^3,x^4)$ is:
\beq \d s^2 = - \d u \left( dv - \frac{v^2}{2} (C e^{f} - g_{33} b^2 e^{2f}) \d u - b e^f \d x^3 - f_{,x^4} \d x^4 \right) + g_{pq}(x^4) \d x^{p} \d x^{q}, \nonumber \eeq
\noindent where $f$ and $g_{pq}$ are functions of $x^4$ and the indices $p, q \in [3,4]$.  Then relative to the type ${\bf D^k}$ null frame, $\{ \bn, \bl, \bm^3, \bm^4\}$ with $i, j \in [3,4]$ the Ricci tensor components are:
\beq \begin{aligned}  
R_{12} &= Ce^f - g_{33} e^{2f} - \frac12 [ \Gamma^{4~3}_{~3} m_{4}^{~p} w_{p} + m_{i}^{~q}(m^{i~p} w_{p})_{,q} + \Gamma^{k~i}_{~i} m_{k}^{~p} w_{p} ]  \\
&  \hspace{ 6 mm }  +  (m_3^{~3} e^{f} )^2 + \delta^{ij} m_{i}^{~q} m_{j}^{~p} w_{q} w_{p} \\
R_{33} &=  - \Gamma^{4}_{~33} m_{4}^{~p} w_{p} + \frac12 (m_3^{~3} e^f)^2 + \tilde{R} \\
R_{34} &= m_{4}^{~q} (m_3^{~3} e^f)_{,q} - \Gamma^k_{~(34)} m_k^{~p} w_p + \frac12 m_3^{~3} m_{4}^{~p} e^f w_{p}  \\ 
R_{44} &= m_{4}^{~q}( m_{4}^{~p} w_{p})_{,q} - \Gamma^k_{~44} m_k^{~p} w_{p} + \frac12 m_{4}^{~q} m_{4}^{~p} w_{q} w_{p} + \tilde{R}, \end{aligned} \eeq

\noindent where $w_p = f_{,x^p} + \delta_{3p} be^f$ and $\tilde{R}$ is proportional to the Gaussian curvature of the 2D transverse space. In this case, the differential equations arising from imposing the Einstein condition \eqref{EinsteinM} are not readily solved in this coordinate system.

\end{itemize}

\bibliographystyle{unsrt-phys}
\bibliography{DkReferences}

\end{document}